\theoremstyle{plain}
\newtheorem{lemma}{Lemma}
\newcommand{\cB}{\mathcal{B}}
\newcommand{\cC}{\mathcal{C}}
\newcommand{\cE}{\mathcal{E}}
\newcommand{\cH}{\mathcal{H}}
\newcommand{\cO}{\mathcal{O}}
\newcommand{\cR}{\mathcal{R}}
\newtheorem{theorem}{Theorem}
\begin{document}
\title{Noise-adapted Quantum Error Correction for Non-Markovian Noise}
\author{Debjyoti Biswas}
\email{deb7059@gmail.com}
\affiliation{Department of Physics, Indian Institute of Technology Madras, Chennai 600036, India.}
\affiliation{Center for Quantum Information, Communication and Computing, Indian Institute of Technology Madras, Chennai 600036, India.}
\author{Shrikant Utagi}
\email{shrikant.phys@gmail.com}
\affiliation{Department of Physics, Indian Institute of Technology Madras, Chennai 600036, India.}
\author{Prabha Mandayam}
\email{prabhamd@physics.iitm.ac.in}
\affiliation{Department of Physics, Indian Institute of Technology Madras, Chennai 600036, India.}
\affiliation{Center for Quantum Information, Communication and Computing, Indian Institute of Technology Madras, Chennai 600036, India.}

\begin{abstract}
We consider the problem of quantum error correction (QEC) for non-Markovian noise. Using the well known Petz recovery map, we first show that conditions for approximate QEC can be easily generalized for the case of non-Markovian noise, in the strong coupling regime where the noise map becomes non-completely-positive at intermediate times. While certain approximate QEC schemes are ineffective against quantum non-Markovian noise, in the sense that the fidelity vanishes in finite time, the Petz map adapted to non-Markovian noise uniquely safeguards the code space even at the maximum noise limit. Focusing on the case of non-Markovian amplitude damping noise, we further show that the non-Markovian Petz map also outperforms the standard, stabilizer-based QEC code. Since implementing such a non-Markovian map poses practical challenges, we also construct a Markovian Petz map that achieves similar performance, with only a slight compromise on the fidelity. 
 
\end{abstract}

\maketitle

\section{Introduction}
When a quantum system undergoes decoherence \cite{breuer2002theory,weiss2008quantum}, it can do so in either of the ways. Under certain conditions, the system evolves according to Markovian dynamics in which it unidirectionally loses its information content to the environment. More generally, systems undergo non-Markovian dynamics originating due to strong system-environment coupling. This in turn gives rise to the so-called \textit{information backflow} due to which the lost information comes back, at least partly, to the system \cite{breuer2009measure,RHP10}. The latter is generally termed the ``quantum memory effect" or \emph{non-Markovianity}, for short. In fact, information backflow represents the strongest form of non-Markovianity, while there exist processes that capture weaker forms of non-Markovian dynamics, see for example, \cite{RHP14,breuer2016colloquium,li2018concepts,shrikant2023NM}.

Quantum noise is generally detrimental to quantum information processing tasks and the current quantum hardware relies on various methods of mitigating or suppressing the noise \cite{cai2023quantum}. One of the ways to protect quantum information or reverse the effects of noise in quantum computers is to implement quantum error correction (QEC) \cite{nielsen_chuang_2010} -- an essential strategy for scaling up noisy quantum hardware to robust, fault-tolerant quantum computers. Much of the literature so far considers the noise to be Markovian for the purposes of developing QEC techniques which include the general purpose (or, standard) QEC \cite{nielsen_chuang_2010,terhal_qec} or approximate QEC (AQEC) \cite{barnum2002,leung,beny2010general,ng2010simple,mandayam2012towards}. 

More recently, there have been attempts to address the question of non-Markovian errors \cite{oreshkov2007continuous,shabani2009maps,len2018open,byrd2020sufficient,cao2023quantum}, which occur due to either drift \cite{proctor2020detecting}, cross-talk \cite{white2023filtering}, or, strong coupling with the environment \cite{RHP14,breuer2016colloquium,vega2017dynamics,li2018concepts,shrikant2023NM}. Moreover, non-Markovian memory effects might pose a major challenge to fault-tolerant quantum computation \cite{terhal_qec,terhal2005fault,aharonov2006fault,lemberger2017effect}, motivating the need for a more general theory of non-Markovian QEC \cite{pollock2018non,tanggara2024strategic}.

In this work, we study the efficacy of standard QEC and approximate, noise-adapted QEC in the presence of non-Markovian noise. We consider the approach of channel-adapted QEC based on the Petz recovery map \cite{barnum2002, ng2010simple, mandayam2012towards}, wherein  the recovery operation exactly adapts to the error (or, Kraus) operators. We first show that this approach leads to a generalization of the known approximate QEC conditions for the case of not completely positive (NCP) noise maps. This then motivates us to study the efficacy of the Petz map in correcting for non-Markovian errors.

We then move on to the specific example of non-Markovian amplitude damping noise,  which is an important  physical example of an NCP map~\cite{garraway1997nonperturbative,breuer2016colloquium}. In this case, we show that the non-Markovian Petz map which exactly adapts to the noise performs better than other channel-adapted QEC strategies~\cite{leung} and even outperforms the $5$-qubit stabilizer code. However, such a non-Markovian recovery channel may not be amenable for a physical implementation via a quantum circuit. We therefore construct a Markovian Petz channel that adapts to the structure of the noise operators, but not to the noise strength and find that the  Markovian Petz recovery channel performs nearly as well as the non-Markovian Petz map. However, there a price one pays for when the Petz channel is not exactly adapted to the noise strength: the composite QEC channel becomes non-unital, which in turn causes certain codes to be sub-optimal in terms of the error correction fidelity. We also analyse the non-Markovian nature of the composite QEC superchannel that ultimately explains the behavior of the worst-case fidelity for different QEC protocols.

The rest of the paper is structured as follows. Sec.~\ref{sec:prelim} is dedicated to introducing the concept of quantum non-Markovianity of quantum channels and providing a brief review on the literature of non-Markovian quantum error correction as well as the  Petz map. In Sec.~\ref{sec:bounds_on_exact}, we derive bounds on the fidelity obtained using perfectly correctable codes for non-Markovian noise. In Sec.~\ref{sec:noise-adapted}, we derive similar bounds for approximate quantum codes and noise-adapted recovery maps. This leads to approximate QEC conditions that are sufficient for a quantum code to correct for non-Markovian errors using the Petz recovery channel map. Finally, in Sec.~\ref{sec:examples} we consider the case of non-Markovian amplitude damping noise, and show that a Petz recovery map adapted to the noise in the Markovian regime suffices for approximate QEC, albeit at the cost of making the QEC superchannel non-unital, as discussed in Sec. \ref{sec:nonunital-consequence}. We summarize our results and outline future directions in Sec.~\ref{sec:concl}.

\section{Preliminaries \label{sec:prelim}}
\subsection{Non-Markovianity of quantum dynamical maps}
We begin with a brief discussion on the physical origins for non-Markovianity in the evolution of open quantum systems. There are a number of approximations that go into formulating a master equation that describes open system evolution, namely, (i) the initial factorization assumption, where the joint system-environment (S-E) state is assumed to be of the  product form (ii) the Born-Markovian approximation, in which a large separation between system and environment time-scale is assumed and (iii) the rotating-wave approximation (RWA), in which fast (or, counter)-rotating terms in the interaction Hamiltonian are ignored. Relaxing (i) or (iii) leads to the system evolution map being non-completely-positive (NCP) \cite{li2018concepts}. 

However, retaining (i) and (iii) but relaxing (ii) leads to a time-dependent master equation for the system density operator $\rho(t)$, that takes the form of a time-dependent GKSL-like equation (in the canonical form) \cite{hall2014canonical}.
\begin{align}
\frac{d \rho(t)}{dt} &= \mathcal{L}(t)[\rho(t)] \nonumber \\ &= \sum_j \Gamma_j(t) \bigg(L_j(t) \rho(t) L_j^\dagger(t) - \frac{1}{2}\{L_j^\dagger(t) L_j(t), \rho(t)\} \bigg).
\label{eq:gksl-like}
\end{align}
where $\{L_{j}(t)\}$ are called the jump operators,  $\{\Gamma_j(t)\}$ are the canonical decay rates, both of which maybe time-dependent in general, and $\{A,B\} = AB + BA$ is the anti-commutator for any two operators $A$ and $B$. $\mathcal{L}(t)$ denotes the time-dependent Lindbladian operator associated with the open system dynamics. The process in Eq. (\ref{eq:gksl-like}) is termed time-\textit{in}homogeneous Markovian when all the canonical decay rates $\Gamma_j(t)$ are positive for all times, in which case all the approximations from (i)-(iii) are assumed to be in place, implying that the environment operators are \textit{not} delta-correlated in time~\cite{breuer2016colloquium,li2018concepts}.

The solution to the master equation is given by a dynamical map of the form 
\begin{align}
    \mathcal{E}(t,t_0) = \mathcal{T} \exp \left\{ \int^{t}_{t_0}  \mathcal{L}(\tau) d\tau \right\},
\end{align}
where $\mathcal{T}$ is the time-ordering operator and $\tau$ is intermediate time. When (ii) is lifted, the map may be non-Markovian with at least one of the canonical decay rates $\Gamma(t)$ becoming negative at certain interval of time-evolution \cite{garraway1997nonperturbative,breuer2002theory,RHP10,hall2014canonical}. The quantum map $\mathcal{E}$ corresponding to a given Lindbladian $\mathcal{L}$ is also often referred to as a quantum noise channel. 

Given a space of bounded operators $\rho \in \cB(\cH)$, a linear map  $\mathcal{E}: \cB(\cH) \rightarrow \cB(\cH)$ is said to be completely positive (CP) if and only if its corresponding \textit{Choi} matrix $\chi$ is positive:
\begin{align}
    \chi(t_2,t_0) = \left(\mathcal{E}(t_2,t_0)\otimes I \right)[\ketbra{\Psi}] \ge 0, \label{eq:CJmatrix}
\end{align}
where $\ket{\Psi} = \ket{00} + \ket{11}$ is an unnormalized Bell state. When this is guaranteed, the map can be written in the Kraus form as
\begin{align}
    \rho(t) = \mathcal{E}(t)[\rho(0)]  = \sum_j E_j(t) \rho(0) E^\dagger _j (t),
    \label{eq:kraus_form}
\end{align}
where $E_j(t)$ are the Kraus operators, and the initial time $t_0=0$ for simplicity. {Throughout the paper, time-dependence of Kraus operator is implicit and we drop the parameter in parenthesis henceforth.}

One of the ways to look at non-Markovian dynamics is via the notion of \textit{divisibility} of the quantum dynamical maps corresponding to the noise \cite{RHP10}. A quantum map $\mathcal{E}(t_2,t_0)$, taking a density matrix from $t_0$ to $t_2$, is divisible if it can be concatenated as,
\begin{equation}
    \mathcal{E}(t_2,t_0) = \mathcal{E}(t_2,t_1)\mathcal{E}(t_1,t_0)
    \label{eq:divisibility-condition}
\end{equation}
for all $t_2 \ge t_1 \ge t_0$. This equality breaks down for non-Markovian noise in the sense that the intermediate map $\mathcal{E}(t_2,t_1) = \mathcal{E}(t_2,t_0)\mathcal{E}^{-1}(t_1,t_0)$ is NCP. A prototypical example of such a channel is non-Markovian amplitude damping discussed in Appendix \ref{sec:NMAD}. For this noise channel, the full map $\mathcal{E}(t_2,t_0)$ from $t_0$ to $t_2$ is CPTP, but the intermediate map $\mathcal{E}(t_2,t_1)$ is not. For the noise model we consider in this work, even though the intermediate map $\mathcal{E}(t_2,t_1)$ is NCP, it is still linear, Hermiticity-preserving and trace-preserving (HPTP). 

It is important to mention that there exists a hierarchy of divisibility conditions for CPTP maps \cite{chruscinski2014degree}. Two main conditions relevant for non-Markovian dynamics are CP-indivisibility and P-indivisibility. A CPTP map is said to be P-\textit{in}divisible if the intermediate map $\mathcal{E}(t_2,t_1)$ is not even positive. In this sense, a P-indivisible noise is necessarily CP-indivisible but the converse may not be true \cite{chruscinski2018divisibility}. However, for a channel with single jump operator $L$ in Eq.~(\ref{eq:gksl-like}), for a system of dimension two, it is known that CP-divisibility and P-divisibility coincide \cite{breuer2016colloquium,chruscinski2014degree,chakraborty2019information}, which suffices for us in the present work. This concept becomes relevant to us in Section~\ref{sec:nm_pdiv}. 

\subsection{Quantum Error Correction and the Petz map \label{sec:petzmap}}
Quantum error correction (QEC) proceeds by encoding the information into a subspace of a larger Hilbert space. A QEC protocol is characterized by the codespace $\mathcal{C}$ into which the information is encoded, as well as the recovery operation $\mathcal{R}$ which is in general a CPTP map. 

A codespace $\mathcal{C}$ is said to be \emph{ perfectly correctable} for a noise channel $\mathcal{E}$ with Kraus operators $\{E_{i}\}$ if there exists a recovery map $\mathcal{R}$ such that $(\mathcal{R}\circ\mathcal{E})[\rho]\propto \rho$, for all states $\rho$ on the codespace. The well known Knill-Laflamme (KL) conditions provide a set of necessary and sufficient conditions for a noise channel to be perfectly correctable on a codespace $\mathcal{C}$, as follows~\cite{knill_laflamme}.
\begin{align}
    PE_i^{\dagger}E_jP &= \alpha_{ij}P, \label{eq:perfect-conditions}
\end{align}
where $P$ is the projector onto the codespace $\cC$ and $\alpha_{ij}$ are complex scalars that form the elements of a Hermitian matrix. 

On the other hand, when there exists a recovery map $\cR$ such that $\mathcal{R}\circ\mathcal{E}[\rho]$ is close in fidelity to the original state $\rho$, for all states in a codespace $\cC$, the noise channel $\cE$ is said to be \emph{approximately correctable} on $\cC$~\cite{leung}. The search for good approximate quantum codes naturally leads to the idea of \emph{noise-adapted} QEC, where the code and recovery map are tailored to a specific noise channel, in contrast to standard quantum codes that are designed to perfectly correct for Pauli errors. Noise-adapted QEC serves as an interesting route to resource-efficient protocols, when a specific noise is known to be dominant in a given physical setting.

Several variants noise-adapted recovery maps have been proposed in the literature, based on analytical~\cite{leung, ng2010simple} and numerical~\cite{fletcher2008} constructions. Here, we will primarily focus on a specific form of the recovery channel known as the Petz map~\cite{petz1986sufficient, barnum2002}. Corresponding to a quantum code $\cC$ and noise map $\cE$, the Petz recovery channel is defined as, 
\begin{align}
\cR_{P, \cE} [.] &= P\,\cE^{\dagger} \,\left[\cE[P]^{-1/2}(.)\cE[P]^{-1/2}\right]\,P \nonumber \\
&= \sum_{i=1}^{N} P  E_{i}^{\dagger}\left(\cE[P]^{-1/2}(.)\cE[P]^{-1/2} \right) E_{i} P, 
\label{eqn:code-specific-petz}
\end{align}
where $P$ is the projector onto the codespace $\cC$ and $\cE[P] = \sum_{i}E_{i}PE_{i}^{\dagger}$ denotes the action of the noise map on the codespace. This map is known to provide a universal and near-optimal strategy for any code used and noise channel~\cite{ng2010simple,mandayam2012}. The near-optimality of the Petz recovery channel is stated with respect to the worst-case fidelity on the codespace, defined as,
\begin{align}\label{eq:wcf}
    F^2_{\rm min} &= \underset{|\psi\rangle \in \cC}{\rm min} \,\,F^2(\cR\circ \cE [\,|\psi\rangle\langle\psi|\,],|\psi\rangle)
\end{align}
where $\,F^2(\rho,|\psi\rangle) = \langle\psi|\rho|\psi\rangle$ is the fidelity between the states $\rho$ and $|\psi\rangle$.

\subsection{QEC for HPTP noise maps}
We now briefly review some of the recent studies on QEC for non-Markovian errors. In a Markovian process, the noise strength reaches maximum only in the asymptotic time limit, ensuring that we are within the assumptions of standard quantum error correction~\cite{nielsen}. Conversely, in a non-Markovian process, the noise strength can fluctuate significantly, from zero to its maximum value, within a finite time domain, causing the majority voting scheme, and subsequently the error correction protocol, to fail. In Sec.~\ref{sec:examples} we show that this is indeed the case, using the example of non-Markovian amplitude damping.

In this work, we specifically focus on HPTP noise channels, for which the map could NCP at intermediate times. We first note that for any HPTP map, there exists an operator sum-difference representation \cite{shabani2009maps,omkar2015operator,byrd2020sufficient,cao2023quantum} of the form,
\begin{align}
    \cE^{ \rm HPTP}_{\tau \rightarrow t}[\rho] &= \sum_i {\rm sign}(i) E_i\rho E^\dagger_i,
    \label{eq:hptp-kraus}
\end{align}
for $t \geq \tau$, where the index $i$ corresponds to $i^{\rm th}$ eigenvalue of the Choi matrix (Eq.~(\ref{eq:CJmatrix})) of the corresponding dynamical map. {Once again, we note that all operators are time-dependent unless otherwise stated.} 

A simple example {of such a map} would be the inverse of a phase-flip channel, which is an NCP-HPTP map \cite{shabani2009maps}. In fact, any linear NCP map can be represented as a difference of two CP maps $\mathcal{E}_1$ and $\mathcal{E}_2$ \cite{yu2000positive}. For example, one can find maps $\cE_{1}$ and $\cE_{2}$ that lead to an operator sum-difference decomposition for the non-Markovian amplitude damping channel~\cite{omkar2015operator}. 

It was recently argued~\cite{cao2023quantum} that the original Knill-Laflamme conditions given in Eq.~(\ref{eq:perfect-conditions}) are sufficient for non-Markovian errors characterized by Kraus operators $\{\text{sign}(i),E_i(t)\}$. This argument implicitly assumes that the matrix of coefficients $\alpha_{ij}$ in Eq.~\eqref{eq:perfect-conditions} is always diagonalizable.  However, it has been shown earlier ~\cite{byrd2020sufficient} that the diagonal form of the KL conditions $P F^\dagger_i F_j P = d_{ii}\delta_{ij}P$ would not be satisfied if $\cE_2[P \rho P] \ne 0$, that is, the code space is mapped outside the positivity domain of the noise map, thus causing the errors to be indistinguishable. Moreover, the $\alpha_{ij}$ matrix becomes pseudo-Hermitian for NCP noise and there is no guarantee that there exists a (pseudo-)unitary operator that diagonalizes it \cite{byrd2022erratum}, thus invalidating the conclusions of Ref.~\cite{cai2023quantum}.

Finally, we make a few remarks regarding the foundational issues surrounding the notion of quantum non-Markovianity and the dynamical map formalism of open quantum systems. Recent developments \cite{pollock2018non,milz2019completely} suggest that an \emph{operational} notion of (non-)Markovianity renders the two-time map approach described above insufficient in describing non-Markovian stochastic process in full generality. However, for our current purposes it suffices to note that the full map $\mathcal{E}(t_2,t_0)$ describing the overall timescale of the non-Markovian process is still a CPTP map and can be viewed as a black box. Furthermore, for the non-Markovian amplitude damping considered in this paper, there exists an exactly solvable time-dependent GKSL-type equation (\ref{eq:gksl-like}) thanks to conservation of photon numbers, which means that we have exact time-dependent Kraus operators at our disposal. In this light, given the full noise profile, we will see that the above discussed operational issues have no negative effects on channel-adapted QEC.

\section{Fidelity bounds on exact QEC for non-Markovian noise}\label{sec:bounds_on_exact}
We begin by studying the effects of non-Markovian noise within the framework of standard, general-purpose quantum error correction. Since the non-Markovian noise regime involves rapid decay and recurrence in the population and coherence levels in the density matrix, this in turn implies that the noise strength rapidly reaches its maximum strength within a short time, thus raising the question of the efficacy of QEC in this regime. We first write down an explicit expression for the worst-case fidelity, within the framework of \emph{perfect} QEC where some subset of errors are assumed to be perfectly correctable by the code.

\begin{lemma}
Consider a HPTP map $\cE[.]= \sum\limits_{k} {\rm sign}(k) A_k[.]A_k^{\dagger}$ and a codespace $\cC$ with associated projector $P$. Let $\{E_{i}\}$ denote the subset of errors perfectly correctable by the code $\cC$ and $\{F_{j}\}$ denote the subset of errors not correctable by the code $\cC$. The worst-case fidelity after QEC is then given by,

\begin{eqnarray}
    &&  F^2_{\rm min} = 1 - \nonumber \\
        && \underset{|\psi\rangle \in \cC}{\rm min} \sum\limits_{k, l} \frac{{\rm sign}(l)}{\alpha_{kk}} \left(\langle \psi|M^{\dagger}_{kl}M_{kl}|\psi\rangle - |\langle\psi|M_{kl}|\psi\rangle|^2\right), 
     \end{eqnarray}
with scalars $\{\alpha_{kk}\}$ and matrices $\{M_{kl}\}$ defined by,
\begin{eqnarray}
    PE_{k}^{\dagger}E_{l}P &=& \alpha_{kl}P, \nonumber \\
    PE_{k}^{\dagger}F_{l}P &=& PM_{kl}P. \label{eq:corr_err} 
\end{eqnarray}
\end{lemma}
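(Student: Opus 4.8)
\emph{Proof idea.} The plan is to exhibit the recovery map explicitly, evaluate the worst-case fidelity $\langle\psi|\cR\circ\cE[|\psi\rangle\langle\psi|]|\psi\rangle$ summand by summand, and then use trace preservation of $\cE$ to recast the answer in the stated ``variance'' form. First I would put the correctable block into diagonal Knill--Laflamme form: for any $|\psi\rangle\in\cC$ the numbers $\alpha_{kl}=\langle\psi|E_k^\dagger E_l|\psi\rangle$ are exactly the entries of the Gram matrix of the vectors $\{E_k|\psi\rangle\}$, so $\alpha\ge 0$ and is unitarily diagonalisable; recombining $E_k\mapsto\sum_m u_{km}E_m$ (which changes neither $\cE$ nor the perfect correctability of the set $\{E_k\}$) lets me assume $PE_k^\dagger E_lP=\alpha_{kk}\delta_{kl}P$, after discarding any $k$ with $\alpha_{kk}=0$.

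Next I would build the recovery as the Petz channel of Eq.~\eqref{eqn:code-specific-petz} adapted to the correctable set $\{E_k\}$ \emph{alone}: with the mutually orthogonal syndrome projectors $\Pi_k=\alpha_{kk}^{-1}E_kPE_k^\dagger$ it has Kraus operators $R_k=\alpha_{kk}^{-1/2}PE_k^\dagger$, completed to a CPTP map by one further operator $R_\perp$ supported on $I-\sum_k\Pi_k$ that sends this complement into $\cC^{\perp}$, so that $R_\perp^\dagger|\psi\rangle=0$ for $|\psi\rangle\in\cC$. The only feature of the HPTP setting that matters here is that $\cR$ is a genuine CP map, so $\cR\circ\cE$ is again linear and retains the operator sum-difference form $\sum_k{\rm sign}(k)(\cdot)$ of Eq.~\eqref{eq:hptp-kraus}; the signs simply propagate through the computation, with no appeal to positivity of any intermediate map.

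Then I would evaluate the fidelity. Using $\langle\psi|R_k E_m|\psi\rangle=\alpha_{kk}^{1/2}\delta_{km}$ (perfect correctability) and $\langle\psi|R_k F_l|\psi\rangle=\alpha_{kk}^{-1/2}\langle\psi|M_{kl}|\psi\rangle$ (the second line of Eq.~\eqref{eq:corr_err}), together with $R_\perp^\dagger|\psi\rangle=0$, each recovery branch gives $\langle\psi|R_k\cE[|\psi\rangle\langle\psi|]R_k^\dagger|\psi\rangle=\alpha_{kk}+\sum_l {\rm sign}(l)\,\alpha_{kk}^{-1}|\langle\psi|M_{kl}|\psi\rangle|^2$, hence
\begin{align}
\langle\psi|\cR\circ\cE[|\psi\rangle\langle\psi|]|\psi\rangle=\sum_k\alpha_{kk}+\sum_{k,l}\frac{{\rm sign}(l)}{\alpha_{kk}}\,|\langle\psi|M_{kl}|\psi\rangle|^2 .
\end{align}
Trace preservation of $\cE$ gives $\sum_k\alpha_{kk}=1-\sum_l{\rm sign}(l)\langle\psi|F_l^\dagger F_l|\psi\rangle$, and expanding $F_l|\psi\rangle$ over the syndrome family yields $\langle\psi|F_l^\dagger F_l|\psi\rangle=\sum_k\|\Pi_kF_l|\psi\rangle\|^2=\sum_k\alpha_{kk}^{-1}\langle\psi|M_{kl}^\dagger M_{kl}|\psi\rangle$. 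Substituting, the two $M_{kl}$-terms merge into $\langle\psi|M_{kl}^\dagger M_{kl}|\psi\rangle-|\langle\psi|M_{kl}|\psi\rangle|^2$, and minimising over $|\psi\rangle\in\cC$ gives the statement.

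\emph{Main obstacle.} The algebra above is routine; what needs care is the bookkeeping of the NCP/HPTP sign structure -- checking that the correctable-block Gram matrix remains positive semidefinite so that the diagonalisation and the recovery construction are legitimate even though $\cE$ itself need not be CP, and that the recovery can be chosen so that the \emph{only} surviving contributions to $\langle\psi|\cR\circ\cE[\cdot]|\psi\rangle$ are those enumerated. The latter amounts to the vectors $F_l|\psi\rangle$ lying in $\bigoplus_k E_k\cC$, i.e.\ the syndrome family $\{\Pi_k\}$ being complete on the range of the uncorrectable errors; this is precisely the sense in which $\{E_k\}$ is the ``perfectly correctable part'' of the noise, and it is where the structure of the particular code $\cC$ enters.
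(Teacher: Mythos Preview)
Your approach is essentially the same as the paper's: build the recovery $R_k=\alpha_{kk}^{-1/2}PE_k^\dagger$, compute $\langle\psi|\cR\circ\cE[|\psi\rangle\langle\psi|]|\psi\rangle$ term by term, and then use trace preservation to convert the correctable-block contribution into $1$ minus the ``variance'' expression. The paper does not first diagonalise the KL matrix (it keeps general $\alpha_{kl}$ and lets the off-diagonal pieces cancel via the identity in Eq.~\eqref{eq:alpha_kl_P}), and it packages the TP condition of $\cE$ together with the TP condition of $\cR$ in a single step rather than expanding $F_l|\psi\rangle$ over the syndrome projectors as you do; but these are presentational differences, not different ideas.

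Two remarks. First, your intermediate lines silently drop the sign on the correctable block: the $k$th recovery branch actually contributes ${\rm sign}(k)\,\alpha_{kk}$ (not $\alpha_{kk}$), and the TP identity reads $\sum_k{\rm sign}(k)\,\alpha_{kk}=1-\sum_l{\rm sign}(l)\langle\psi|F_l^\dagger F_l|\psi\rangle$. Because you make the same slip in both places, the substitution still yields the correct final formula, but the intermediate expressions are not literally right. Second, the hypothesis you isolate in your ``Main obstacle'' --- that $F_l|\psi\rangle\in\bigoplus_kE_k\cC$, equivalently that $\sum_k\Pi_k$ acts as the identity on the range of the uncorrectable errors --- is exactly what the paper invokes when it asserts $\sum_kR_k^\dagger R_k=I$ without adding an $R_\perp$. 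So you have correctly located the one structural assumption that both arguments rely on.
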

\begin{proof}

We first rewrite the noise process $\cE$ in terms of the correctable and un-correctable errors as
\begin{align}
    \cE[.] &= \sum\limits_{k}{\rm sign}(k) E_{k}[.]E_{k}^{\dagger} +\sum\limits_{l}{\rm sign}(l) F_{l}[.]F_{l}^{\dagger}.
    \end{align}
For the correctable errors $\{E_k\}$, we can construct a CPTP recovery map $\cR$ with the Kraus operators $R_k = PU_{k}^{\dagger}$, where $\{U_k\}$ are unitaries appearing from the polar decomposition of the operators $E_kP= U_{k}\sqrt{PE_k^{\dagger}E_kP}$. Therefore, the recovery operators $R_k$ take the form,
    \begin{align}\label{eq:rec_op}
        R_k = \frac{1}{\sqrt{\alpha_{kk}}}PE_{k}^{\dagger}.
    \end{align}
The recovered state corresponding to an initial encoded state $|\psi\rangle\langle\psi|$ is thus given by,
    \begin{align}\label{eq:rec_state}
        \cR\circ\cE\,[|\psi\rangle\langle\psi|\,]&= \sum\limits_{k,l} \frac{{\rm sign}(l)}{\alpha_{kk}}P E_k^{\dagger}E_lP |\psi\rangle\langle\psi| E_l^{\dagger}E_{k}P \nonumber \\ &+  \sum\limits_{k,l} \frac{{\rm sign}(l)}{\alpha_{kk}}P E_k^{\dagger}F_lP |\psi\rangle\langle\psi| F_l^{\dagger}E_{k}P.
    \end{align}

From Eq.~\eqref{eq:rec_state}, we can derive the fidelity between the error-corrected state and the initial state as, 
\begin{align}\label{eq:fidelity}
        F^2\left(\cR\circ\cE [\,|\psi\rangle\langle\psi|\,,|\psi\rangle] \right)&= \sum\limits_{k,l} \frac{{\rm sign}(l) |\alpha_{kl}|^2}{\alpha_{kk}} \nonumber \\ 
        &+  \sum\limits_{k,l} \frac{{\rm sign}(l)}{\alpha_{kk}} |\langle\psi|M_{kl}|\psi \rangle|^2.
    \end{align}

\noindent This equality follows directly from Eqs.~\eqref{eq:corr_err}. The trace preserving (TP) condition for the noise process implies, 
     \begin{align}\label{eq:tp_noise}
         \sum\limits_{k} {\rm sign}(k)E_k^{\dagger}E_k+ \sum\limits_{k}{\rm sign}(k) F_k^{\dagger}F_k = I
     \end{align}
Using this and the TP condition $\sum\limits_{k} R_k^{\dagger}R_k = I$ for the recovery, we get,
     \begin{align}
          \sum\limits_{k,l} \left( {\rm sign}(k) E_k^{\dagger}R_l^{\dagger}R_lE_k + {\rm sign}(k) F_k^{\dagger} R_l^{\dagger}R_lF_k \right) = I.
     \end{align}
Using the form of the recovery operators in Eq.\eqref{eq:rec_op} and Eqs.~\eqref{eq:corr_err} the above equation takes the form,
     \begin{align}
      P &= P  \sum\limits_{k,l} \left( {\rm sign}(k) E_k^{\dagger}R_l^{\dagger}R_lE_k + {\rm sign}(k) F_k^{\dagger} R_l^{\dagger}R_lF_k \right)P, \nonumber \\
      &=  \sum\limits_{k,l} \frac{{\rm sign}(l) |\alpha_{kl}|^2 P}{\alpha_{kk}} + \sum\limits_{k,l} \frac{{\rm sign}(l) }{\alpha_{kk}} M_{kl}^{\dagger}M_{kl} \label{eq:alpha_kl_P}.
     \end{align}
     Therefore, using the Eq.~\eqref{eq:alpha_kl_P}, we can rewrite the fidelity expression in Eq.~\eqref{eq:fidelity} to get the desired expression for the worst-case fidelity,
     \begin{eqnarray}
         && F^2_{\rm min} = 1 - \nonumber \\
          && \underset{|\psi\rangle \in \cC}{\rm min}\sum\limits_{k, l} \frac{{\rm sign}(l)}{\alpha_{kk}} \left(\langle \psi|M^{\dagger}_{kl}M_{kl}|\psi\rangle - |\langle\psi|M_{kl}|\psi\rangle|^2\right).
          \label{eq:exactwcf-exactQEC}
     \end{eqnarray}
     \end{proof}
     
The fidelity loss, defined as $\eta = 1-F^2_{\rm min}$, therefore has the form,
\begin{align}\label{eq:fid_loss_perfect}
         \eta = \underset{|\psi\rangle \in \cC}{\rm min} \sum\limits_{k,l} \frac{{\rm sign}(l) }{\alpha_{kk}} \left(\langle \psi|M^{\dagger}_{kl}M_{kl}|\psi\rangle - |\langle\psi|M_{kl}|\psi\rangle|^2\right).
     \end{align}

From the Cauchy-Schwarz inequality we observe that the expression in the parenthesis in the Eq.\eqref{eq:fid_loss_perfect}, remains positive for all the time. The ${\rm sign}(l)$ term however, becomes negative for some error operators during those intermediate times for which there is information back flow. This clearly shows that the fidelity (or, correspondingly, the fidelity loss) exhibits an oscillatory behavior in the presence of non-Markovian noise, even within the framework of perfect QEC. 

We can see this explicitly, by noting that for certain time intervals, the infidelity  $\eta$ takes the following form,
     \begin{align}
         \eta &= \underset{|\psi\rangle \in \cC}{\rm min} \sum\limits_{k,l_{1}} \frac{1}{\alpha_{kk}} \left(\langle \psi|M^{\dagger}_{kl_{1}}M_{kl_{1}}|\psi\rangle - |\langle\psi|M_{kl_{1}}|\psi\rangle|^2\right) \nonumber\\ 
         &\qquad-\underset{|\psi\rangle \in \cC}{\rm min} \sum\limits_{k,l_{2}} \frac{1}{\alpha_{kk}} \left(\langle \psi|M^{\dagger}_{kl_{2}}M_{kl_{2}}|\psi\rangle - |\langle\psi|M_{kl_{2}}|\psi\rangle|^2\right),
     \end{align}
where ${\rm sign}(l_{1}) > 0$ and ${\rm sign}(l_{2}) < 0$. In the above expression the second term corresponds to a drop in the fidelity loss and hence a revival of fidelity. The first term, on the other hand, corresponds to damping of the peaks in the revivals. We demonstrate this with an example in  Sec.~\ref{sec:examples}, while performing the stabilizer-measurement based recovery for non-Markovian amplitude-damping noise. 

Finally, one may easily verify that as the noise strength approaches its minimum value, the fidelity loss reaches its minimum. In fact, it was previously shown that any stabilizer code would meet a similar end for non-Markovian decoherence if a stabilizer recovery is performed in a continuous time QEC setting~\cite{oreshkov2007continuous}. In what follows, we show how our work points to the possibility of error correcting non-Markovian noise without having the fidelity drop to zero at finite times.

\section{Noise-adapted QEC for non-Markovian errors}\label{sec:noise-adapted}
We now consider two kinds of noise-adapted recovery maps tailored to non-Markovian noise and characterize their performance via the worst-case fidelity. Note that since HPTP errors are not accessible experimentally during intermediate times when the map in non-CP, a noise-adapted recovery for HPTP maps is not practically implementable. However, in what follows, we will show how the Petz recovery map adapts to the \textit{full} map $\mathcal{E}(t_2,t_0)$ and corrects for all intermediate HPTP errors in such a way that it takes advantage of the information backflow occurring naturally due to non-Markovian decoherence over a period of time. The recoverability of Petz channel in such a case can be attributed to the fact that Petz-based recovery does not rely on syndrome extraction via a feedback loop.

One might be led to assume that a recovery operation such as the Petz channel, when adapted to a non-Markovian noise, must also be non-Markovian in order to achieve a good fidelity. However, as we show later, even a Markovian Petz channel suffices for recovery with pretty good fidelity, although in this case the Petz channel does not exactly adapt to the noise in the sense that it adapts to the Kraus operator structure but not to the time-dependent noise parameter. The price to pay, however, is that the total $\mathcal{R}\circ\mathcal{E}$ channel need not be unital. The worst-case fidelity therefore suffers a drop, an aspect which is further discussed in Sec. \ref{sec:nonunital-consequence}.

Recall from Eq.~\eqref{eqn:code-specific-petz} that for a noise channel with Kraus operators $\{E_{i}\}$ and a codespace with projector $P$, the Petz channel is the map with Kraus operators $R_i = PE_i^\dagger(t) \cE[P]^{-1/2}$. In the case of non-Markovian noise, we may consider two variants of the Petz recovery map. Naively, we could let the Petz recovery operators exactly adapt to the noise operators at all times. This would result in a \emph{non-Markovian Petz channel}, denoted as $\cR^{NM}$, one that violates the CP-divisibility condition in Eq.~\eqref{eq:divisibility-condition} ($\cR(t_2,t_0)=\cR(t_2,t_1)\cR(t_1,t_0)$) and has an operator sum-difference representation of the form,
\begin{equation}
    \cR^{NM}[\rho] = \sum_{j} {\rm sign}(j)R_{j} [\rho] R_{j}^{\dagger}. \label{eq:nmPetz}
\end{equation}
On the other hand, we could construct a \emph{Markovian Petz channel} (denoted as $\cR^{M}$) where the operators $R_{i}$ are adapted only to the structure of the noise operators $E_{i}$, but with a \emph{fixed} noise strength so as to ensure that the map stays in the Markovian regime for all times. In this case, the Petz recovery map is CP-divisible in the sense of Eq.~\eqref{eq:divisibility-condition}.

We now obtain bounds on the worst-case fidelity using both the Markovian and non-Markovian Petz recovery maps via a simple generalization of the AQEC conditions in Ref.~\cite{ng2010simple} to HPTP noise channels.
\begin{theorem}
Consider a HPTP map $\cE(t)\sim\{{\rm sign}(i),E_i(t)\}$, and a $d$-dimensional code $\cC$ with projector $P$. Let $\Delta_{ij}(t)\in\cB(\cC)$ be traceless operators such that
\begin{equation}\label{eq:AECcond1}
PE_i^\dagger (t) \cE[P]^{-1/2}E_j(t) P=\beta_{ij}(t)P+\Delta_{ij}(t),
\end{equation}
where $\beta_{ij}(t)\in\mathbb{C}$. Then, the fidelity loss achieved using the non-Markovian Petz recovery is given by,
\begin{align}
\eta(t)& = \sum\limits_{i,j}\, {\rm sign}(i) {\rm sign}(j) (\langle \psi|\Delta_{ij}^{\dagger}(t)\Delta_{ij}(t)|\psi\rangle \nonumber\\
& \qquad-|\langle \psi|\Delta_{ij}(t)|\psi\rangle|^2).
\label{eq:etaP2}
\end{align}
For the Petz recovery adapted to the Markovian regime of the noise, the fidelity loss takes the form,
\begin{align}
    \eta(t) &= \sum\limits_{i,j}\, {\rm sign}(i)(\langle \psi|\Delta_{ij}^{\dagger}(t)\Delta_{ij}(t)|\psi\rangle -|\langle \psi|\Delta_{ij}(t)|\psi\rangle|^2). \label{eq:markov-loss-Petz}
\end{align}
\end{theorem}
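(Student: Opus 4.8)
The plan is to compute the error-corrected state $(\cR^{NM}\circ\cE)[|\psi\rangle\langle\psi|]$ directly in terms of the operator sum-difference representations of $\cE$ and $\cR^{NM}$, extract the fidelity $\langle\psi|(\cR^{NM}\circ\cE)[|\psi\rangle\langle\psi|]|\psi\rangle$, and then use the trace-preservation constraints to eliminate the $\beta_{ij}$-dependent terms, exactly mirroring the argument in Lemma~1. First I would write $\cR^{NM}\circ\cE\,[|\psi\rangle\langle\psi|] = \sum_{i,j}{\rm sign}(i){\rm sign}(j)\,R_i E_j |\psi\rangle\langle\psi| E_j^\dagger R_i^\dagger$ with $R_i = PE_i^\dagger\cE[P]^{-1/2}$, so that $R_i E_j P = PE_i^\dagger\cE[P]^{-1/2}E_j P = \beta_{ij}P + \Delta_{ij}$ by the defining relation~\eqref{eq:AECcond1}. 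Substituting this into the fidelity gives
\begin{align}
F^2 &= \sum_{i,j}{\rm sign}(i){\rm sign}(j)\,\langle\psi|(\beta_{ij}P+\Delta_{ij})|\psi\rangle\langle\psi|(\beta_{ij}^*P+\Delta_{ij}^\dagger)|\psi\rangle \nonumber\\
&= \sum_{i,j}{\rm sign}(i){\rm sign}(j)\,\big(|\beta_{ij}|^2 + 2\,{\rm Re}(\beta_{ij}^*\langle\psi|\Delta_{ij}|\psi\rangle) + |\langle\psi|\Delta_{ij}|\psi\rangle|^2\big),
\end{align}
using $P|\psi\rangle=|\psi\rangle$. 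The cross term and the $|\beta_{ij}|^2$ term must be shown to combine into $1$ minus the $\langle\psi|\Delta_{ij}^\dagger\Delta_{ij}|\psi\rangle$ contribution; this is where the TP conditions enter.

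The key step is to exploit trace preservation of both $\cE$ and the recovery. Trace preservation of $\cE$ gives $\sum_i {\rm sign}(i)E_i^\dagger E_i = I$, hence $\cE[P]$ is well-defined and $\cR^{NM}[\cE[P]] = P$ is the natural normalization of the Petz map on the codespace. Sandwiching the operator identity $\sum_i{\rm sign}(i)R_i^\dagger R_i$ appropriately — more precisely, applying $\sum_i{\rm sign}(i)\,P E_k^\dagger R_i^\dagger R_i E_l P$ summed against ${\rm sign}(k){\rm sign}(l)$ and using $\sum_{k}{\rm sign}(k)E_k PE_k^\dagger = \cE[P]$ together with $\cE[P]^{-1/2}\cE[P]\cE[P]^{-1/2}=\Pi$ (the projector onto the support of $\cE[P]$) — produces the identity $\sum_{i,j}{\rm sign}(i){\rm sign}(j)(\beta_{ij}P+\Delta_{ij})^\dagger(\beta_{ij}P+\Delta_{ij}) = P$. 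Expanding the left side and taking the expectation in $|\psi\rangle$ yields $\sum_{i,j}{\rm sign}(i){\rm sign}(j)(|\beta_{ij}|^2 + 2\,{\rm Re}(\beta_{ij}^*\langle\psi|\Delta_{ij}|\psi\rangle) + \langle\psi|\Delta_{ij}^\dagger\Delta_{ij}|\psi\rangle) = 1$, where the cross term survived because $\Delta_{ij}$ is traceless but \emph{not} orthogonal to $P$ in the relevant bilinear form. Subtracting this from the fidelity expression above cancels $|\beta_{ij}|^2$ and the cross terms, leaving $F^2 = 1 - \sum_{i,j}{\rm sign}(i){\rm sign}(j)(\langle\psi|\Delta_{ij}^\dagger\Delta_{ij}|\psi\rangle - |\langle\psi|\Delta_{ij}|\psi\rangle|^2)$, which is~\eqref{eq:etaP2}.

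For the Markovian Petz map, the difference is that the recovery operators $R_i = PE_i^\dagger\,\tilde\cE[P]^{-1/2}$ are built from a \emph{CP} reference channel $\tilde\cE$ (the noise frozen at a fixed strength), so the recovery itself carries no sign factors: $\cR^M[\rho] = \sum_i R_i\rho R_i^\dagger$. Repeating the computation, only the ${\rm sign}(j)$ coming from the genuinely NCP noise channel $\cE$ survives, while the recovery index $i$ contributes no sign, giving the single-sign formula~\eqref{eq:markov-loss-Petz}. One subtlety worth flagging: the cross-term cancellation in the Markovian case relies on the TP identity for $\cR^M$ being $\sum_i R_i^\dagger R_i = \Pi$ on the support of $\tilde\cE[P]$, which need not equal the support of $\cE[P]$; I would address this by assuming (as is implicit in the theorem's setup and holds for the amplitude-damping example) that $\tilde\cE[P]$ and $\cE[P]$ share the same support, so that $\cE[P]^{-1/2}$ in~\eqref{eq:AECcond1} should really read $\tilde\cE[P]^{-1/2}$ for the Markovian case.

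The main obstacle is the cross-term bookkeeping: unlike the perfectly-correctable setting of Lemma~1, here $\Delta_{ij}$ is only traceless, so the terms linear in $\Delta_{ij}$ do not vanish on their own and must be cancelled against matching linear terms produced by the TP identity — getting the signs and the ${\rm sign}(i){\rm sign}(j)$ versus ${\rm sign}(i)$ structure to line up consistently between the fidelity expansion and the normalization identity is the delicate part. A secondary point requiring care is the inverse $\cE[P]^{-1/2}$: for NCP $\cE$ the operator $\cE[P]=\sum_i{\rm sign}(i)E_iPE_i^\dagger$ need not be positive semidefinite, so $\cE[P]^{-1/2}$ must be interpreted via the functional calculus on its (possibly signed) eigenvalues, and one should check that $\cR^{NM}$ as defined is still a well-defined HPTP map on the relevant support; I would note that for the physically relevant non-Markovian amplitude damping the full map $\cE(t_2,t_0)$ is CPTP so $\cE[P]\ge 0$ and this issue does not arise at the endpoint times of interest.
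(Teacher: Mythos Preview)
Your proposal is correct and follows essentially the same route as the paper: write the composite map with Kraus-like operators $R_iE_jP=\beta_{ij}P+\Delta_{ij}$, expand the fidelity, derive the normalization identity $\sum_{i,j}{\rm sign}(i){\rm sign}(j)\bigl(|\beta_{ij}|^2+2\,{\rm Re}(\beta_{ij}^*\langle\psi|\Delta_{ij}|\psi\rangle)+\langle\psi|\Delta_{ij}^\dagger\Delta_{ij}|\psi\rangle\bigr)=1$ from the TP conditions on $\cE$ and $\cR$, and subtract. The paper obtains the same identity by composing the separate completeness relations $\sum_i{\rm sign}(i)E_i^\dagger E_i=I$ and $\sum_j{\rm sign}(j)R_j^\dagger R_j=I$, whereas you arrive at it (equivalently) via $\cE[P]^{-1/2}\cE[P]\cE[P]^{-1/2}=\Pi$; both give Eq.~\eqref{eq:tp_petz}. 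Your remark that in the Markovian case the surviving sign should sit on the \emph{noise} index is correct given the labeling in Eq.~\eqref{eq:AECcond1}; the paper's ${\rm sign}(i)$ in Eq.~\eqref{eq:markov-loss-Petz} reflects an internal index swap in its proof rather than a different argument.
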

\begin{proof}
The proof is straightforward. From Eq.~(\ref{eq:hptp-kraus}) we have $\sum\limits_{i}\, {\rm sign}(i)\,E_i^{\dagger}(t)E_i(t)= I$. Note that, for a Petz recovery with Kraus operators $R_i(t)$, which are \emph{exactly} adapted to the non-Markovian noise process, the completeness condition is given by $\sum\limits_{i}\, {\rm sign}(i)\,R_i^{\dagger}(t)R_i(t)= I$. Therefore, we have, 
    \begin{align}
          \sum\limits_{i,j}\, {\rm sign}(i) {\rm sign}(j) \,E_i^{\dagger}(t)R_j^{\dagger}(t)R_j(t)E_i(t)= I.
    \end{align}    
Now, using the AQEC condition in Eq.\eqref{eq:AECcond1}, we get, 
\begin{align}\label{eq:tp_petz}
    \sum\limits_{i,j}\, {\rm sign}(i){\rm sign}(j) \,(|\beta_{ij}(t)|^2 +\langle\psi|\Delta_{ij}^{\dagger}(t)\Delta_{ij}(t)|\psi\rangle \nonumber \\+\langle\psi|\Delta_{ij}^{\dagger}(t)|\psi\rangle\beta_{ij}(t) +\beta^*_{ij}(t)\langle\psi|\Delta_{ij}(t)|\psi\rangle)&=1.
\end{align}
Therefore, the fidelity between the recovered state $(\cR^{NM}\circ \cE)(|\psi\rangle\langle\psi|)$, and the input state $|\psi\rangle$ works out to be,
\begin{align}
 F^2 &=  \sum\limits_{ij} {\rm sign}(i) {\rm sign}(j) \langle\psi|PE_i^{\dagger}\cE[P]^{-1/2}E_jP|\psi\rangle|^2.
\end{align}
Finally, using the identity in Eq.\eqref{eq:tp_petz} and Eq.\eqref{eq:AECcond1}, we arrive at the expression for the infidelity $\eta$ between the recovered state and the input state $|\psi\rangle \in \cC$, given in Eq.~\eqref{eq:etaP2}.

The proof for the fidelity bounds for Markovian Petz recovery follows in similar lines of the previous case, except with ${\rm sign} (j) = +1$ for all times $t$, because the Petz channel is now adapted to the noise in the Markovian regime and is CP-divisible: $\cR^{M}(t_2,t_0) = \cR^{M}(t_2,t_1)\cR^{M}(t_1,t_0)$, for all $t_2 \ge t_1 \ge t_0$.
\end{proof}
Eq.~\eqref{eq:etaP2} elucidates how the fidelity-loss arises from the presence of the $\Delta_{ij}$ operators. If $\Delta_{ij}=0~\forall i,j$, the above conditions reduce to exact (or, standard) QEC, and the Petz recovery map reduces to the standard recovery operation. 

Interestingly, we note that the near optimality of the Petz recovery is independent of whether the errors are Markovian or not, in the sense that it is near-optimal for \textit{any} CPTP noise and the \textit{full} non-Markovian map $\mathcal{E}(t_2,t_0)$ is CPTP, although it is CP-\textit{in}divisible. The proof of near-optimality follows from the previous paper \cite{ng2010simple}, which we do not repeat here. Rather, we demonstrate the near-optimality of the Petz map for the specific case of non-Markovian amplitude-damping noise.

\section{The case of  non-Markovian amplitude damping noise \label{sec:examples}}

We consider a prototypical model of non-Markovian noise, namely, the non-Markovian amplitude damping {(AD)} channel which arises from a damped Jaynes-Cummings (JC) interaction of a two-level system and a bosonic reservoir at zero temperature. This model has a well defined Markovian and non-Markovian regime, as explained in  Appendix \ref{sec:NMAD}. Here, we directly write down the solution to the master equation in the form of the Kraus operators. The action of the non-Markovian AD noise on a density operator $\rho(t)$ is given by $\rho(t) = \mathcal{E}(t)[\rho]  = \sum_j E_j(t) \rho E^\dagger _j (t)$, where the \emph{time-dependent} operators $E_{1}(t)$ and $E_{2}(t)$ are given by,
\begin{align}
E_1(t) = \left(
\begin{array}{cc}
1 & 0 \\
0 & \sqrt{1-\gamma(t)} \\
\end{array}
\right);  E_2(t) = \left(
\begin{array}{cc}
0 & \sqrt{\gamma(t)} \\
0 & 0 \\
\end{array}
\right).
\label{eq:adkraus1}
\end{align}
For the damped JC model considered here, the damping parameter $\gamma(t)$ takes the form \cite{breuer2016colloquium} $$\gamma(t)=1-|G(t)|^2,$$ with
 \begin{align}
G(t) =e^{-\frac{b t}{2}} \left( \frac{b}{d} \sinh \left[\frac{d t}{2}\right] + \cosh \left[\frac{d t}{2}\right] \right).
\label{eq:G}
\end{align}
Here, $d = \sqrt{b^2 - 2\Gamma_0 b}$, where $\Gamma_0$ quantifies the strength of the system-environment coupling and $b$ is the spectral bandwidth. The system undergoes non-Markovian evolution when $b << 2\Gamma_0$. When $b >> 2\Gamma_0$, the dynamics is time-homogeneous Markovian -- it has a Lindblad form with a constant decay rate ($\Gamma(t) = \Gamma_0$) and the corresponding channel belongs to a family of one-parameter semigroups.

For this noise model, we now study the performance of standard QEC in comparison {with noise}-adapted AQEC strategies. {Specifically,} we consider {one example of a} stabilizer and {a} noise-adapted code in combination with three different types of recovery operations, namely the standard syndrome-based recovery as well as the noise-adapted Leung~\cite{leung} and Petz recoveries. {For each of these cases, we numerically obtain the worst-case fidelity as a function of time and summarize our results in the plots shown in Figs.~\ref{fig:plot1} and \ref{fig:nm_leung}.}

First, we consider the standard {approach to} QEC based on the $[[5,1,3]]$ code, which has the following stabilizer generators.
\begin{align}
    S = \langle XZZXI,IXZZX,XIXZZ,ZXIXZ \rangle . 
\end{align}
{We see from Figure~\ref{fig:plot1}, that when this code is used in conjunction with the syndrome-based recovery (denoted as $\cR^{S}$) to correct for non-Markovian amplitude damping noise, the} worst-case fidelity {shows oscillatory behavior with time, following the expressions in Eq.~(\ref{eq:fid_loss_perfect}). More importantly, it reaches a minimum value that is} below $0.5$, {in finite time}. This can be explained by noting that in the large damping limit the uncorrectable set of errors have {a} greater contribution {to} the fidelity, which {can be exactly evaluated to be,} 
\begin{equation}
    F^2_{\rm min}[\cR^{S}\circ\cE] = 1-1.875 (\gamma^2 - \gamma^3) -0.625 \gamma^4.
\end{equation} 
Clearly, when $\gamma = 1$ one can observe that $F^2_{\rm min} \approx 0.375$. This is indeed consistent with the conclusions drawn in Ref.~\cite{oreshkov2007continuous} {regarding the performance of the} three-qubit code {with} stabilizer recovery, {in the presence of non-Markovian bit-flip noise}.

\begin{figure}[t!]
    \centering
    \includegraphics[width=1\columnwidth]{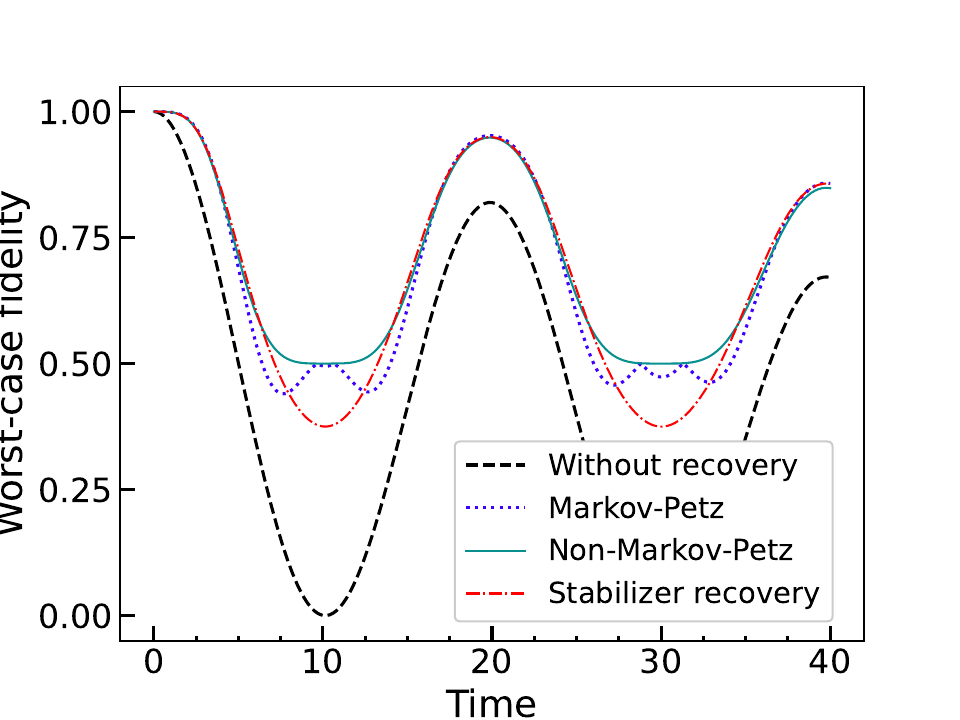}
    \caption{Performance of different QEC schemes for non-Markovian AD noise with noise parameters $b=0.01$ and $\Gamma_0=5$ (see Appendix \ref{sec:NMAD}). We compare {the bare qubit fidelity with that achieved by the $[[5,1,3]]$ code with} stabilizer recovery and {the $4$-qubit code} with Petz recovery (Markovian and non-Markovian).}
    \label{fig:plot1}
\end{figure}

Next, we consider the four-qubit code tailored to correct {for} amplitude damping errors, {with} codewords~\cite{leung}, 
\begin{align}
\ket{0_L}&= \frac{1}{\sqrt{2}}(\ket{0000} + \ket{1111}) \nonumber \\
\ket{1_L}&= \frac{1}{\sqrt{2}}(\ket{0011} + \ket{1100}). \label{eq:4-qubit}
\end{align}

\noindent We study the performance of the Petz recovery map defined in Eq.~\eqref{eqn:code-specific-petz}, adapted to this $4$-qubit code in the presence of non-Markovian amplitude damping noise. We consider both the Markovian and non-Markovian versions of the Petz map, as described in Sec.~\ref{sec:noise-adapted}. We find that the Markovian and the non-Markovian Petz recoveries outperform the $5$-qubit stabilizer code and syndrome-based recovery, as evidenced by the plots in Fig.~\ref{fig:plot1}. 

In fact, we see from Fig.~\ref{fig:plot1} that the non-Markovian Petz channel safeguards the codespace {with a fidelity greater than $0.5$, even at the maximum noise limit corresponding to time $t=10$ units. We can obtain an analytical expression that provides the best fit for the worst-case fidelity, for the $4$-qubit code under Markovian and non-Markovian Petz recovery, respectively, as,
\begin{eqnarray}
   F^2_{\rm min}[\cR^{M}\circ\cE] =  1 &-& 1.658 \gamma^2 + 1.069 \gamma^3 - 1.517 \gamma^4 \nonumber \\
   &+& 2.563 \gamma^5 -0.955 \gamma^6, \nonumber \\ 
 F^2_{\rm min}[\cR^{NM}\circ\cE] = 1 &-& 1.715 \gamma^2 + 0.362 \gamma^3 + 2.35 \gamma^4 \nonumber \\
 &-&   1.93 \gamma^5 +0.428 \gamma^6.
\end{eqnarray} 

However, when the Petz channel is adapted to the Markovian regime of the noise, we observe a drop in the fidelity below $0.5$. We show in Sec.~(\ref{sec:nonunital-consequence}) below that this drop occurs due the fact that the composite QEC channel $\cR^{M} \circ \cE$ becomes non-unital in this case. This happens because the Markovian Petz map is \emph{not exactly} adapted to the noise process -- its Kraus operators are adapted to the structure of the Kraus operators of the noise channel but not to the \emph{exact} noise strength.

 \begin{figure}[t!]
 \centering
\includegraphics[width=1\columnwidth]{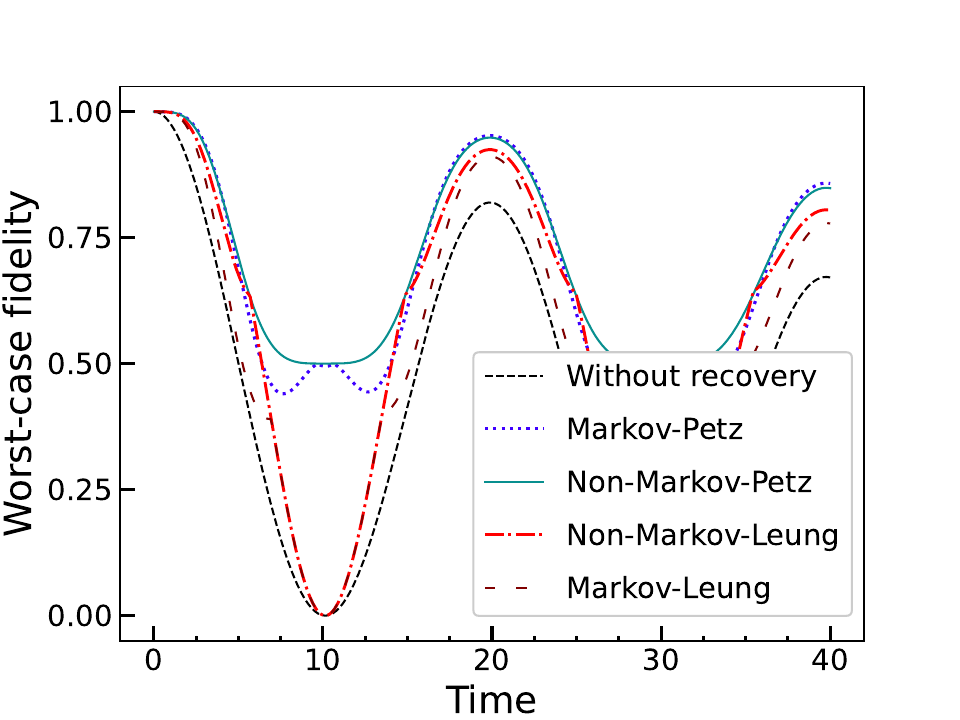} 
\caption{Performance of different recovery schemes for the $[4,1]$ code subject to non-Markovian AD noise with the noise parameters $b=0.01$ and $\Gamma_0=5$ (see Appendix \ref{sec:NMAD}). For the Markovian Petz and Markovian Leung recovery, the recoveries are adapted to AD noise in the Markovian regime with $b=0.1$ and $\Gamma_0 =0.005$.}
\label{fig:nm_leung}
\end{figure} 

In Fig.~\ref{fig:nm_leung}, for the $4$-qubit code adapted to AD noise, we compare the performance of two different noise-adapted recoveries, namely the Petz map and the recovery map constructed by Leung \emph{et al.}~\cite{leung}. The Leung recovery map involves nontrivial unitaries that arise via polar decomposition of the Kraus operators of the noise map. We merely define the Leung recovery here and refer to Appendix~\ref{sec:leung_reco_app} for a detailed discussion. For a code $\cC$ with projector $P$ that satisfies the set of AQEC conditions in Eqs.~\eqref{eq:ortho_nm_leung} for a set of noise operators $\{E_{i}\}$, the Leung recovery has Kraus operators $R_{i} = PU_{i}^{\dagger}$, where $E_{i}P = U_{i}P\sqrt{PE_{i}^{\dagger}E_{i}P}$. It was shown~\cite{leung} that this recovery map achieves the desired order of fidelity for any code and noise process that satisfies Eqs.~\eqref{eq:ortho_nm_leung}.

\begin{widetext}
\begin{figure*}[t]
  \includegraphics[width=0.62\textwidth]{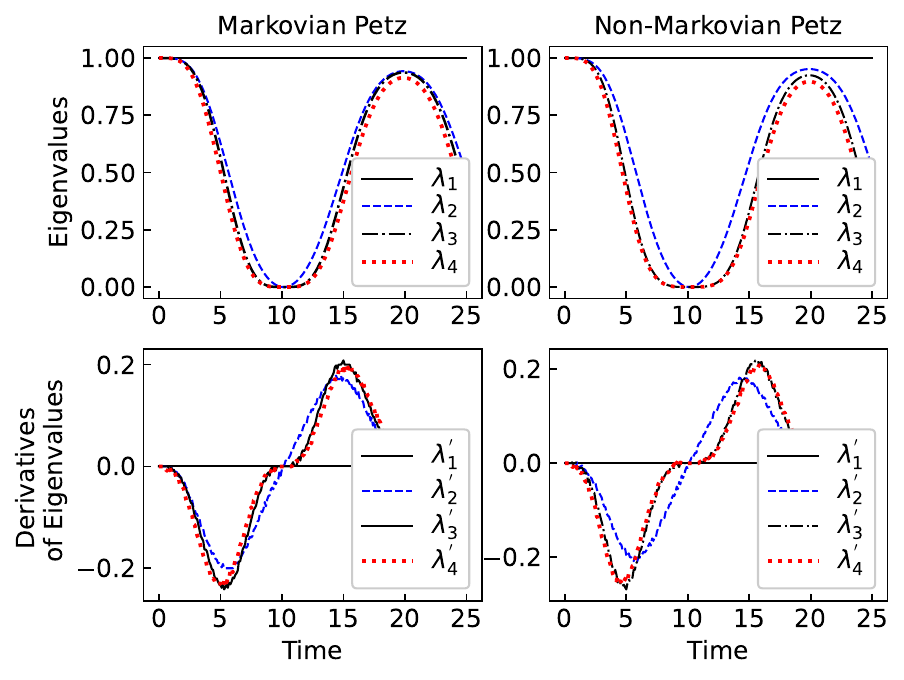}
      \caption{Eigenvalues of the matrix $M$ in Eq.\eqref{eq:RE-matrixform} (\emph{top graphs}) and their time-derivatives ($\lambda'$) (\emph{bottom graphs}) for Markovian Petz channel (adapted to AD with parameter $b=0.1$ and $\Gamma_0=0.005$) and non-Markovian Petz channel (adapted to AD with parameter $b=0.01$ and $\Gamma_0=5$) specific to the $4$-qubit code. The noise parameters are explained in Appendix \ref{sec:NMAD}.}
      \label{fig:eig1}
\end{figure*}    
\end{widetext}

The $4$-qubit code in Eq.~\eqref{eq:4-qubit} satisfies the AQEC conditions in Eq.~\eqref{eq:ortho_nm_leung} for the amplitude damping channel. We can therefore study the performance of the Leung recovery adapted to both Markovian and non-Markovian regimes of the AD channel. However, as seen in Fig.~\ref{fig:nm_leung}, both Markovian as well as non-Markovian Leung recoveries are ineffective in the large damping limit, since the fidelity becomes zero. The poor  performance of the Leung recovery map can be understood  via the fidelity expressions derived in Appendix~\ref{sec:leung_reco_app}.

\subsection{Consequences of the non-unitality of the QEC channel \label{sec:nonunital-consequence}}
Here, we explain the early recurrence of worst-case fidelity, before the point in time at which information backflow begins, observed in Figs.~\ref{fig:plot1} and~\ref{fig:nm_leung}, for the Markovian Petz recovery map. This additional oscillation in the fidelity is due to the combined effect of the structure of the code and the non-unitality of the QEC channel and is not due to non-Markovianity. This is  demonstrated in Appendix \ref{sec:inexact-demo}, by showing such a revival even in the case of Markovian noise, when the Petz recovery is not exactly adapted to the noise strength. On the other hand, when the Petz recovery channel is non-Markovian, which corresponds to exactly  adapting to the noise, the composite $\mathcal{R}^{NM}\circ \mathcal{E}$ channel is unital. These conclusions become apparent when viewed via the matrix representation of a CPTP map as follows.

We consider the composite CPTP map $\mathcal{R}\circ \mathcal{E} \equiv \Phi$. The action of $\Phi$ on a density operator $\rho$, can be represented by a matrix $M$ with the matrix elements 
\begin{align}\label{eq:M_bloch}
    M_{ij}&={\rm Tr}[O_i \Phi[O_j]],
\end{align}
where $\{O_i\}_{i=0}^{d^2-1}$ forms an Hilbert-Schmidt basis. 

The matrix $M$ is real because the the operators $O_i$ are hermitian and the map $\Phi$ is CP. We can write any state $\ket{\psi} \in \cC$ in the Hilbert-Schmidt basis as 
\begin{align}\label{eq:rho_bloch}
    \rho = \ket{\psi}\bra{\psi} = \frac{1}{d}(I +\Vec{O}.\Vec{r}),
\end{align}
where $\Vec{O}= (O_1,O_2,\hdots,O_{d^2-1})^T$ and $\Vec{r}$ is a real $(d^2-1)$ element Bloch-vectors, and $T$ is the transpose operation. As prescribed in \cite{hkn_pm2010}, the  Eq.\eqref{eq:rho_bloch} along with the Eq.\eqref{eq:M_bloch}, gives the following expression for the fidelity between the states $\Phi[\ket{\psi}\bra{\psi}]$ and $\ket{\psi}$ as 
\begin{align}\label{eq:fid_M}
    F^2 &= \frac{1}{d}(r^{\rm T}. M. r).
\end{align}
Therefore, for a code $\cC$ of dimension $d$, the matrix $M$ becomes a $d\times d$ matrix with following form 
\begin{align}
 M = \left(
\begin{array}{c|c}
1&0\\
\hline
\Vec{\tau}& \quad T_{{d^2-1}\times {d^2-1}}\\
\end{array}
\right).
\label{eq:RE-matrixform}
\end{align}
 Note that if the channel is unital onto the codespace $\cC$, i.e., $\Phi[P]= P$, then $\Vec{\tau}= \Vec{0}$. Therefore in that case the fidelity will depend upon the eigenvalues of the matrix $T_{{d^2-1}\times {d^2-1}}$ \cite{hkn_pm2010}. However the fidelity for a non-unital channel is 
\begin{align}\label{eq:eq:fid_nonuni}
    F^2 &= \frac{1}{d} (r^{\rm T}. T . r + \Vec{\tau}.\Vec{r}).
\end{align}
Now for a qubit code ($d=2$), the operators $O_i$s are nothing but the Pauli operators onto the codespace \cite{hkn_pm2010} and $O_0= P$ -- the projector onto $\cC$. We estimate the worst-case fidelity $F^2_{\rm min}$ from the Eq.\eqref{eq:eq:fid_nonuni} for the $[4,1]$ code considering $\Phi$ under the constraint $r_x^2+r_y^2+r_z^2=1$. 

Although non-unitality of the composite QEC channel causes the fidelity to drop below the worst-case limit, it is pertinent to ask if combining Petz recovery channel with codes larger than the [[4,1]] code would provide certain improvement. In order to verify the performance of this scheme, we consider [[5,1,3]] and [[6,1,3]] codes with Petz recovery channel and interestingly find that the former performs better than the latter. We refer the reader to Appendix \ref{sec:5-6-code-remedy} for the details. 

\subsection{Non-Markovianity of the composite  QEC channel}\label{sec:nm_pdiv}
Finally, we discuss how we may characterize the non-Markovianity of the entire QEC process via a correspondence between the non-monotonous behavior of fidelity and its relation with indivisibility of $\mathcal{R}\circ \mathcal{E}$ channel. Following \cite{chruscinski2017detecting}, a Hermitian dynamical map is said to be P-divisible iff
\begin{align}
    \frac{d}{dt}\lambda_k(t) \le 0,
    \label{eq:P-div}
\end{align}
where $\lambda_k(t)$ is the $k$-th eigenvalue of the matrix $M$ given in Eq.~(\ref{eq:RE-matrixform}). In Figs.~\ref{fig:eig1} and ~\ref{fig:stab_eigs}, we plot the eigenvalues of the composite channel for the three different QEC schemes studied in Fig.~\ref{fig:plot1}, namely,  the $4$-qubit code with Markovian and non-Markovian  Petz recoveries and the $[[5,1,3]]$ code with stabilizer recovery. Interestingly, for the $[[5,1,3]]$ code, only one eigenvalue is non-constant and contributes to the violation of the inequality Eq.~\eqref{eq:P-div}, whereas for the $4$-qubit code, all the eigenvalues are non-constant and contribute to the violation of Eq.~\eqref{eq:P-div}, as seen in  Fig.~\ref{fig:eig1}.  

\section{Conclusion}\label{sec:concl}
We study the problem of quantum error correction for non-Markovian noise, specifically focusing on the class of Hermiticity-preserving trace-preserving (HPTP) maps which are not completely positive (CP) for certain intermediate times. We first provide fidelity bounds for both standard QEC as well as noise-adapted QEC for this class of noise maps. 

We then consider a specific example of a physically motivated noise model, namely a non-Markovian amplitude damping channel, and compare the performance of exact and noise-adapted QEC schemes for this noise process. We see that the non-Markovian Petz channel adapts to the entire noise map and effectively corrects all intermediate HPTP errors in the low noise limit and preserves the code space in the highest noise limit. In other words, the Petz channel adapted to a fixed timescale of non-Markovian noise leverages information backflow. It uniquely safeguards the worst-case fidelity against the rapid fluctuations of information flow between the system and environment, a capability not shared by other recovery operations.  
Our findings further strengthen the known results on the near-optimality of the Petz channel for Markovian noise~\cite{ng2010simple}. 
\begin{figure}
    \centering
    \includegraphics[width=1\columnwidth]{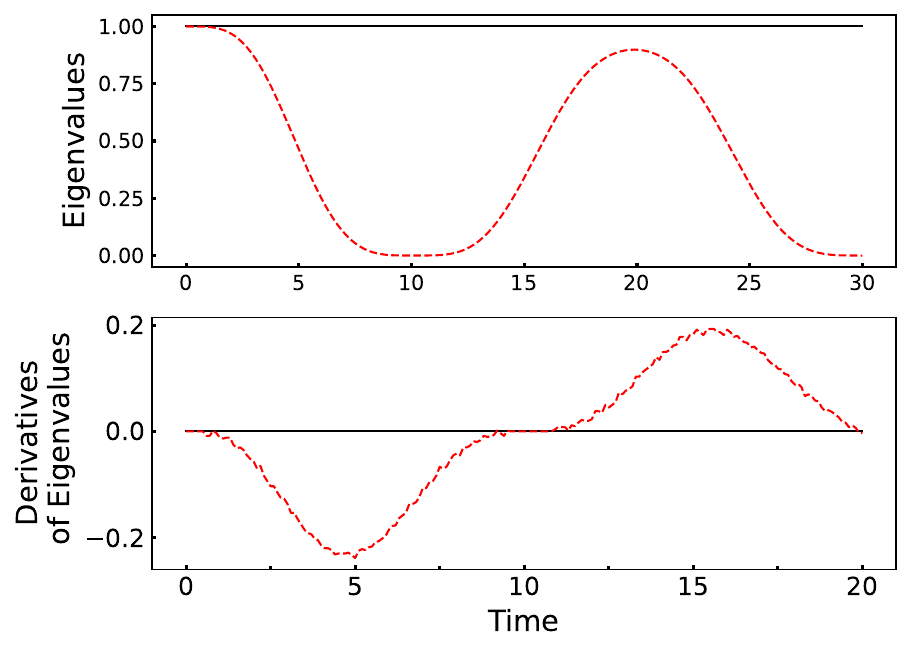}
    \caption{Eigenvalues of the $M$ matrix (top graph) and their derivatives (bottom graph), where the recovery is the stabilizer recovery for the $[[5,1,3]]$ code and the noise is the non-Markovian AD noise (see Appendix \ref{sec:NMAD}), with $b=0.01$ and $\Gamma_0=5$. This plot shows only the non-zero eigenvalues of the matrix $M$.}
    \label{fig:stab_eigs}
\end{figure}

However, implementing such a non-Markovian Petz channel could be nontrivial and somewhat resource intensive. In fact, writing down a quantum circuit for CP-indivisible channel remains an open problem in the literature to our knowledge. Here, we overcome this hurdle by showing that a Petz channel can be inexactly adapted, meaning adapted to only the Kraus operator structure but not to the time-dependent damping parameter, so that a CP-divisible Petz channel can correct for CP-indivisible amplitude damping channel. However, the price we pay is that the total channel seizes to be unital causing a small reduction in fidelity, for a brief period of time, below the worst-case limit of $0.5$.

Going forward, a couple of open questions are in order. In this work, we considered the cases of the $4$-qubit code and the $[[5,1,3]]$ code with the Petz channel as recovery.  
It would be an interesting open question to explore the relationship between the eigenvalue spectra of the QEC superchannel and the optimality of the code to correct non-Markovian errors in general. It would be also interesting to further explore the relationship between the distance of a code and non-Markovianity of QEC superchannel \cite{mottonen2023quantum}. Finally, it would be an interesting challenge to study the efficacy of the Petz recovery map beyond the class of CP-indivisible maps and work towards a general theory of noise-adapted QEC for arbitrary forms of non-Markovian noise.

\section*{Acknowledgments}
SU thanks IIT Madras for the support through the Institute Postdoctoral Fellowship. This work was partially funded by a grant from the Mphasis F1 Foundation to the Center for Quantum Information, Communication, and Computing (CQuICC), IIT Madras. 

%\bibliography{references,QEC,ref}

\section*{Appendix}
\appendix
\section{Non-Markovian amplitude damping channel \label{sec:NMAD}}
Let us consider damped Jaynes-Cummings (JC) model \cite{garraway1997nonperturbative,breuer2016colloquium} of a two level system interacting with a dissipative bosonic reservoir at zero temperature. The total system-reservoir Hamiltonian, in the RWA, is
\begin{align}
H_{\rm  tot} &= \frac{\omega_0 \sigma_z}{2} + \sum_j \omega_j a_j^\dagger a_j + \sum_j (g_j \sigma_{+} a_j + g^\ast _j \sigma_{-} a^\dagger _j),
\end{align}
where $a_j^\dagger$ and $a_j$ are creation and annihilation operators, and $g_j$ are coupling parameters and $\sigma_{+}=\vert 1 \rangle \langle 0 \vert$ and $\sigma_{-}=\vert 0 \rangle \langle 1 \vert$. And the total S-E evolution is given by the von-Neuman equation 
\begin{align}
   \frac{d}{dt} \rho_{\rm SE} = - \frac{i}{\hbar} [H_{\rm tot}, \rho_{\rm SE}]. \label{eq:totalevolution}
\end{align}
The above model has a Lorentzian environmental spectral density \cite{garraway1997nonperturbative}
\begin{align}
    J(\omega) = \frac{\Gamma_0 b^2}{2 \pi [(\omega_0 - \Delta - \omega)^2 + b^2]},  
    \label{eq:spectral}
\end{align}
where $ \Delta = \omega-\omega_0 $ is the detuning parameter which governs the shift in frequency $\omega$ of the qubit from the central frequency $ \omega_0$ of the bath. $ \omega_0$ represents the energy gap between ground state $\ket{0}$ and exited state $\ket{1}$. $\Gamma_0$ quantifies the strength of the system-environment coupling and $b$ the spectral bandwidth. Note that the spectral density in Eq.(\ref{eq:spectral}) leads to \textit{colored} memory kernel, giving rise to non-Markovian dynamics. This particular model indeed has well-defined Markovian and non-Markovian regimes, thus allowing manipulation of control parameters in the Kraus representation as follows.

Assuming that the environment starts out in the grounds state, tracing out the environment degrees of freedom from Eq.~(\ref{eq:totalevolution}) in the interaction picture, gives rise to the GKSL-like equation (\ref{eq:gksl-like}) for the model considered with single jump operator $L_j = L = \sigma_{-}$ \cite{garraway1997nonperturbative,breuer2002theory}:
\begin{equation}
    \frac{d \rho(t)}{dt} = \Gamma(t) \bigg(\sigma_{-} \rho(t) \sigma_{+} - \frac{1}{2}\{\sigma_{+} \sigma_{-}, \rho(t)\} \bigg) .
\label{eq:gksl-AD}
\end{equation} 
The solution to the above master equation then is given by an amplitude damping (AD) channel with the time-\textit{dependent} Kraus operators 
\begin{align}
E_1(t) = \left(
\begin{array}{cc}
1 & 0 \\
0 & \sqrt{1-\gamma(t)} \\
\end{array}
\right) \quad;\quad  E_2(t) = \left(
\begin{array}{cc}
0 & \sqrt{\gamma(t)} \\
0 & 0 \\
\end{array}
\right),
\label{eq:adkraus}
\end{align}
so that the action of the channel on a density operator $\rho(t)$ is given by $\rho(t) = \mathcal{E}(t)[\rho]  = \sum_j E_j(t) \rho E^\dagger _j (t)$.

The density matrix of the open system after the action of the channel is given by \cite{bellomo2007non}
\begin{equation}
			\rho_S(t)=\left(
			\begin{array}{cc}
				\rho_{11}(0)G(t)  & \rho_{10}(0)\sqrt{G(t)}\\\\
				\rho_{01}(0)\sqrt{G(t)}  & \rho_{00}(0)+ \rho_{11}(0)(1-G(t)) \\
			\end{array}\right),
		\end{equation}
		where the function $G(t)$ obeys the differential equation
		\begin{equation}\label{equforp}
			\frac{d}{dt}G(t) =-\int_0^t \mathrm{d} \tau f(t-\tau)G({\tau})\,,
		\end{equation}
		and the correlation function $f(t-\tau)$ is related to the spectral
		density $J(\omega)$ of the reservoir by
		\begin{equation}\label{corrfunc}
			f(t-\tau)=\int \mathrm{d} \omega J(\omega)\exp
			[i(\omega_0-\omega)(t-\tau)]\,.
		\end{equation}
 The time-dependent decay rate given in Eq.~(\ref{eq:gksl-AD}) is given by $\Gamma(t) = -\frac{2}{G(t)} \frac{d G(t)}{dt}$, where negative decay rates indicates non-Markovian dynamics. 

\section{Demonstration of inexact adaptation of Petz channel \label{sec:inexact-demo}}
As noted before, by inexact adaptation we mean that the recovery channel adapts only to the Kraus operator structure of the noise and not to the noise strength. In other words, while constructing the recovery, the noise strength can be arbitrarily chosen. Since our goal is to demonstrate the consequence of inexact adaptation, we do so only in the Markovian regime of the noise. For that, we consider the Petz recovery again for the five-qubit code and for the four-qubit Leung code tailored for correcting amplitude-damping noise. 

To study the performance of the Petz recovery, specific to the $4$-qubit code, we make it adapted to an amplitude-damping process with fixed noise strength, say $\gamma=0.1$, and vary the noise strength from $\gamma=0$ to $\gamma=1$ and plot the worst-case fidelity against it. We see that after $\gamma=0.86$ the worst-case fidelity increases until $\gamma=0.97$, and after that, fidelity falls to the minimum value, which is around $0.46$.

For the $[[5,1,3]]$ code, also known as the Laflamme code \cite{laflamme1996perfectcode, Bennett_david_1996}, we do not see such an increase in the fidelity value. However, if we adapt the Petz recovery to the amplitude damping with a different noise strength $\gamma=0.1$, we see different behavior for this code. As shown in Fig.~\ref{fig}, for such a noise-adapted recovery, the fidelity values approach a value around $\sim 0.48$, which is slightly less than $1/2$. Here, $1/2$ corresponds to the worst-case fidelity between the input state $\ket{\psi} \in \mathcal{C}$ and the maximally mixed state, which is the fixed point for the channel $\mathcal{R} \circ \mathcal{E}$ when $\mathcal{R}$ is exactly adapted to the noise process.

\begin{figure}
\centering
\includegraphics[width=1\columnwidth]{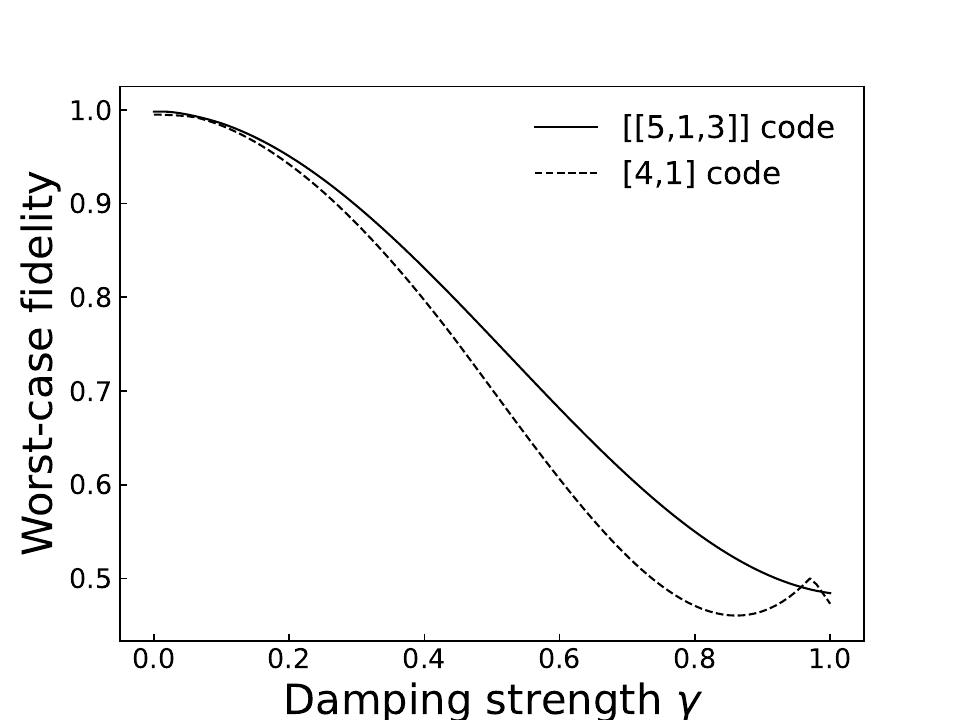}
\caption{Behavior of the worst-case fidelity under the non-unital combined channel $\mathcal{R} \circ \mathcal{E}$ for [[5,1,3]] (the Bold curve) and the [[4,1]] (the dotted curve) codes. Here, the Petz map $\cR$ is adapted to an amplitude-damping process of strength $\gamma=0.1$, for both codes.}
\label{fig}
\end{figure}

{\section{Bounds on fidelity for the Leung recovery adapted to non-Markovian noise }}\label{sec:leung_reco_app}
We begin this section with a review of the pioneering work of Leung {\it et al}~\cite{leung}, which laid out a framework for achieving approximate QEC by adapting the recovery and encoding scheme to the noise. Specifically, consider a code $\cC$ with associated projector $P$, and a noise channel $\cE$ with associated Kraus operators $\{E_{i}\}$. Let $p_l$ and $p_l\lambda_l$ respectively be the maximum and the minimum eigenvalues of the operators $PE_l^{\dagger}E_lP$. It was shown in Ref.~\cite{leung} that The code $\cC$ approximately corrects for  the errors $\{E_i\}$ with worst-case fidelity $F^2_{\rm min} \geq 1- \cO(\epsilon)$, iff,
\begin{align}
    (p_l - \lambda_lp_l)&\leq \cO(\epsilon), \nonumber \\
    PU_l^{\dagger}U_k P &= \delta_{lk}P,\label{eq:ortho_nm_leung}
\end{align} 
where $\{U_k\}$ are the unitary operators obtained from the polar decomposition of the error operators on the codesapce $E_kP =U_k \sqrt{PE_k^{\dag}E_kP}$. {Furthermore, it was shown that the desired fidelity can be achieved via a CPTP recovery map   
$\cR_L$  
with Kraus operators $R_i = P U_i^{\dagger}$.}

In order to {construct a noise-adapted recovery for non-Markovian noise in a similar fashion, we consider two different forms of the Leung recovery $\cR_{L}$.} First we consider the {n\"aive approach}, where the recovery operators are constructed {as above,} using the unitaries obtained via polar decomposition of the noise operators. {In other words, the recovery is \emph{exactly adapted} to the non-Markovian noise and we refer to this scheme as the non-Markovian Leung recovery, in our discussions below.}

In the second scenario we {consider a Markovian Leung recovery by constructing} the recovery operators {using} unitary operators that are obtained {via} polar decomposition of the noise operators {whose noise strength is} adapted to the {Markovian} regime of the noise. {The two approaches obviously yield different bounds on the worst-case fidelity.} 

\subsection{Non-Markovian Leung recovery}\label{sec:leung-theorem}

\begin{lemma}
     Consider a HPTP map $\cE(t)\sim\{sign(l),A_l(t)\}$ and a code $\cC$ with projector $P$. Let $\lambda_lp_l$ and $p_l$ be the smallest and largest eigenvalues, respectively, of the operator $\sqrt{P A_l^{\dag}A_lP}$, where $\{A_l\} $ are {a set of errors such that, $PU_l^{\dagger}U_k P = \delta_{lk}P$, for unitaries $\{U_{l}\}$ defined as $E_{l}P = U_{l}P\sqrt{PE_{l}^{\dagger}E_{l}P}$. The worst-case fidelity is then bounded as follows.}
     \begin{align}
         F^2_{\rm min}&\geq \sum\limits_{l} p_l\lambda_l, 
     \end{align}
     and the upper bound on the $F^2_{\rm min}$ follows as 
     \begin{align}\label{eq:fid_max_nm_leung}
         F^2_{\rm min}&\leq  1- \sum\limits_{l}{\rm sign}(l)\left(\langle\psi|\pi_l^{\dagger}\pi_l|\psi \rangle- |\langle\psi|\pi_l|\psi \rangle|^2\right)\nonumber\\
         & - \sum\limits_{kl}\left(\langle\psi|D_{kl}^{\dagger}D_{kl}|\psi \rangle- |\langle\psi|D_{kl}|\psi \rangle|^2\right)\nonumber\\
    &- \sum\limits_{kl} p_l\lambda_{l}\left(\langle\psi|M_{kl}^{\dagger}M_{kl}|\psi \rangle- |\langle\psi|M_{kl}|\psi \rangle|^2\right),
     \end{align}
     where $M_{kl}= PU_k^{\dagger}N_lP$, $\{N_l\}$ is the set of error operators which the not correctable by the code $\cC$, and $D_{kl}=M_{kl}\pi_l$. Here the operator $\pi_l$ is a residue operator which has the following form \cite{leung} 
     \begin{align}
         \pi_l = \sqrt{PE_l^{\dagger}E_lP} - \sqrt{\lambda_l p_l}P.
     \end{align}
\end{lemma}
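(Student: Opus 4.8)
The plan is to mimic the structure of the proof of Theorem~1, but now accounting for the fact that the Leung recovery operators $R_i = PU_i^\dagger$ are built from the polar-decomposition unitaries rather than from the Petz construction. First I would split the noise operators into the correctable set $\{E_l\}$ (those satisfying the orthogonality condition $PU_l^\dagger U_k P = \delta_{lk}P$) and the uncorrectable set $\{N_l\}$, and write $\cE = \sum_l {\rm sign}(l)\,E_l[\cdot]E_l^\dagger + \sum_l {\rm sign}(l)\,N_l[\cdot]N_l^\dagger$. For the correctable errors I would use the residue-operator decomposition $E_lP = U_l(\sqrt{\lambda_l p_l}\,P + \pi_l)$, following Ref.~\cite{leung}, so that $R_k E_l P = PU_k^\dagger U_l(\sqrt{\lambda_l p_l}P + \pi_l)$, which equals $\sqrt{\lambda_l p_l}\,\delta_{kl}P + \delta_{kl}\pi_l$ on the code by the orthogonality condition. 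For the uncorrectable errors I would set $R_k N_l P = PU_k^\dagger N_l P = M_{kl}$, and then $D_{kl} = M_{kl}\pi_l$ arises from the cross term $R_k E_l P$ paired against... actually more carefully from expanding the action of the recovery on the full noisy state.

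The lower bound $F^2_{\rm min} \geq \sum_l p_l\lambda_l$ I expect to follow almost immediately: restricting attention to only the correctable diagonal contributions and using that $\langle\psi|\pi_l^\dagger\pi_l|\psi\rangle \geq 0$ together with the minimal-eigenvalue bound, exactly as in Leung \emph{et al.}, with the ${\rm sign}(l)$ factors only affecting the uncorrectable/residual pieces which one drops for the lower bound. The upper bound is the substantive part. Here I would write out the recovered state $\cR_L\circ\cE[|\psi\rangle\langle\psi|]$ explicitly as a sum over $k$ of $(\sqrt{\lambda_l p_l}P + \pi_l)|\psi\rangle\langle\psi|(\cdots) + \sum_l {\rm sign}(l) M_{kl}|\psi\rangle\langle\psi|M_{kl}^\dagger$-type terms, compute $F^2 = \langle\psi|\cR_L\circ\cE[|\psi\rangle\langle\psi|]|\psi\rangle$, and then invoke the trace-preservation identities — namely $\sum_l {\rm sign}(l)(E_l^\dagger E_l + N_l^\dagger N_l) = I$ for the noise and $\sum_k R_k^\dagger R_k = I$ for the recovery — sandwiched by $P$, to eliminate the $\sum_l p_l\lambda_l$ and diagonal $|\beta|^2$-type constants in favor of the variance-form expressions $\langle\psi|\pi_l^\dagger\pi_l|\psi\rangle - |\langle\psi|\pi_l|\psi\rangle|^2$, $\langle\psi|D_{kl}^\dagger D_{kl}|\psi\rangle - |\langle\psi|D_{kl}|\psi\rangle|^2$, and $\langle\psi|M_{kl}^\dagger M_{kl}|\psi\rangle - |\langle\psi|M_{kl}|\psi\rangle|^2$. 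Each such variance is nonnegative by Cauchy--Schwarz, so subtracting them from $1$ yields the stated upper bound.

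The main obstacle I anticipate is bookkeeping the cross terms correctly when the map is only HPTP: the ${\rm sign}(l)$ factors break the clean positivity structure, so one must be careful about which terms retain a definite sign and which do not — in particular the $D_{kl}$ term appears \emph{without} a sign factor in the claimed bound, which suggests it comes from pairing a correctable-error amplitude (sign-squared, hence positive) with the residue $\pi_l$, while the bare $M_{kl}$ term carries the eigenvalue weight $p_l\lambda_l > 0$. I would need to track exactly how the completeness relation redistributes the constants, and verify that the leftover expression is genuinely an \emph{upper} bound rather than an equality — i.e., that some nonnegative quantity has been discarded in passing from $F^2$ to the right-hand side. A secondary subtlety is ensuring the residue-operator identities from Ref.~\cite{leung}, originally derived for CPTP noise, carry over verbatim under the operator sum-difference representation; since $\pi_l$ is defined purely in terms of $\sqrt{PE_l^\dagger E_l P}$ and the eigenvalue data, this should be safe, but it warrants an explicit check.
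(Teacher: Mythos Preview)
Your overall strategy — split the noise into correctable $\{E_l\}$ and uncorrectable $\{N_l\}$ parts, expand the recovered state, use the residue decomposition $\sqrt{PE_l^\dagger E_l P} = \sqrt{\lambda_l p_l}\,P + \pi_l$, and then eliminate constants via the trace-preservation identities to leave variance-type terms — is exactly what the paper does. The paper's own proof is terse: it writes down the recovered state, invokes orthogonality, the completeness relation, and ``the triangle inequality'' to assert the upper bound, and handles the lower bound by restricting to the correctable-diagonal contribution and dropping the $\pi_l$ piece.

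There is, however, one concrete gap in your understanding of the sign structure. You treat the recovery as CPTP, writing $\sum_k R_k^\dagger R_k = I$, and you explain the absence of ${\rm sign}(l)$ in the lower bound by saying the sign factors ``only affect the uncorrectable/residual pieces which one drops.'' That is not how the paper obtains the sign-free lower bound. In this lemma the Leung recovery is the \emph{non-Markovian} one: it is exactly adapted to the HPTP noise and therefore itself carries ${\rm sign}(k)$ factors in its operator sum-difference representation. The recovered state thus reads
\[
\cR_L\circ\cE[|\psi\rangle\langle\psi|] \;=\; \sum_{k,l}{\rm sign}(k)\,{\rm sign}(l)\,R_kE_l|\psi\rangle\langle\psi|E_l^\dagger R_k^\dagger \;+\;\cdots,
\]
and after the orthogonality condition forces $k=l$ on the correctable block, the surviving sign product is $({\rm sign}(l))^2 = 1$. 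That cancellation — not a claim that correctable errors carry no sign — is what produces $F^2_{\rm min}\ge\sum_l \lambda_l p_l$ without any residual sign. The same mechanism explains why some terms in the upper bound (your $D_{kl}$ and $M_{kl}$ terms) appear without a ${\rm sign}$ prefactor while the $\pi_l$ variance keeps one: they arise from different pairings of ${\rm sign}(k)$ and ${\rm sign}(l)$ under orthogonality. You half-see this when you speculate about ``sign-squared, hence positive,'' but your stated completeness relation for $\cR_L$ is inconsistent with it; with a CPTP recovery and an HPTP noise the lower bound you would actually get is $\sum_l {\rm sign}(l)\lambda_l p_l$, not the claimed $\sum_l \lambda_l p_l$. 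Fixing this amounts to carrying ${\rm sign}(k)$ on the recovery throughout and using the HPTP-type completeness relation for $\cR_L$ rather than the CPTP one.
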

\begin{proof}
  To appear in the upper bound of the fidelity we consider the recovered state first. The recovered state is the following 
  \begin{align}\label{eq:rec_leung_nm}
      \cR_L \circ \cE[|\psi\rangle\langle\psi|]&= \sum\limits_{kl}{\rm sign}(k){\rm sign}(l)R_k E_l |\psi\rangle\langle\psi| E_l^{\dagger}R_k^{\dagger} \nonumber\\
      &+\sum\limits_{kl}{\rm sign}(k){\rm sign}(l)R_k N_l |\psi\rangle\langle\psi| N_l^{\dagger}R_k^{\dagger}\nonumber\\
      & +\sum\limits_{l}{\rm sign}(l)P_EA_k |\psi\rangle\langle\psi| A^{\dagger}_k P_E.
  \end{align}
   
  Now considering the orthogonality condition in Eq.\eqref{eq:ortho_nm_leung} along with the following completeness condition 
  \begin{align}
      \sum\limits_{lk} E_l^{\dagger}R_k^{\dagger}R_k E_l + \sum\limits_{l}N_l^{\dagger} P_EN_L = I,
  \end{align}
  and the triangle inequality we obtain the bound in Eq.\eqref{eq:fid_max_nm_leung}. The lower bound follows from the Eq.\eqref{eq:rec_leung_nm} as
  \begin{align}
      F^2_{\rm min} \geq \sum\limits_{kl}{\rm sign}(k){\rm sign}(l)|\langle\psi|P U_k^{\dagger} U_lP (\sqrt{\lambda_lp_l}+\pi_l ) |\psi\rangle|^2.
  \end{align}
  Now due to the orthogonality condition in Eq.\eqref{eq:ortho_nm_leung}, we obtain
  \begin{align}
       F^2_{\rm min} \geq \sum\limits_{l}\lambda_lp_l ({\rm sign}(l))^2 = \sum\limits_{l}\lambda_lp_l.
  \end{align}
\end{proof} 

Here, we have considered the case of Leung recovery adapted to the non-Markovian regime of the noise, where the lower bound on the fidelity does not contain the ${\rm sign}(l)$ factor, while the upper bound does. On the other hand, the Leung recovery adapted to the Markovian regime of the noise yields fidelity bounds in which the sign$(l)$ factor that does not vanish, whose details are derived in {the following subsection}. Again, this is expected since for the non-Markovian noise, the uncorrectable part contributes more within a short span of time, causing the fidelity to fall rapidly. This fact becomes apparent in the example we study {in Sec.~\ref{sec:examples}}.

\subsection{Markovian Leung recovery}\label{sec:markov_leung}

\begin{theorem}\label{lem:leung_lem}(Fidelity bounds for Markovian Leung recovery). Consider a HPTP map $\cE(t)\sim\{sign(l),A_l(t)\}$, and a $d$-dimensional code $\cC$ with projector $P$. Let $\lambda_lp_l$ and $p_l$ be the smallest and largest eigenvalues, respectively, of the operator $\sqrt{P E_i^{\dag}E_iP}$, where $\{E_l\} $ are the set of correctable errors. A code $\cC$ is said to be correctable under a recovery operation $\cR_L(.) \sim \sum\limits_{k} PU_k^{m}(.)U_{k}^{m\dagger}P$, which is adapted to the Markovian regime of the noise process, such that the following conditions are satisfied 
\begin{align}\label{eq:ortho_markov_leung}
    PU^{m\dagger}_kU_lP &= \delta_{kl} (\beta_{l}P+\Delta_{ll}),
\end{align}
where $U_k^{m}$ are the unitary operators obtained form the polar decomposition of $E_k^m P$, and $E_k^m$ are the Kraus operators of the noise in the Markovian regime.

 The lower bound of the worst-case fidelity takes the following form 
\begin{align}
     F^2_{\rm min} &\geq \underset{\ket{\psi}}{\rm min}  \sum\limits_{l} {\rm sign}(l) \lambda_lp_l \beta_{l}^2,
\end{align}
where $\beta_{l}$ is real number and $\Delta_{ll}$ is a diagonal operator in $\cB(\cC)$. And the upper bound for the worst-case fidelity is the following 
\begin{align}\label{eq:max_wcf_leung}
    F^2_{\rm min} &\leq 1- \sum\limits_{l}{\rm sign}(l)\beta^2_{l}\left(\langle\psi|\pi_l^{2}|\psi \rangle- |\langle\psi|\pi_l|\psi \rangle|^2\right)\nonumber\\ &-\sum\limits_{k,l}{\rm sign}(l)\left(\langle\psi|\Delta_{ll}^{2}|\psi \rangle- |\langle\psi|\Delta_{ll}|\psi \rangle|^2\right)\nonumber\\
    & - \sum\limits_{k,l} p_l\lambda_{l}{\rm sign}(l)\left(\langle\psi|\pi_l \Delta_{ll}^{2}\pi_{l}|\psi \rangle- |\langle\psi|\pi_l\Delta_{ll}|\psi \rangle|^2\right) -\nonumber\\
    &-\sum\limits_{k,l} p_|\lambda_{l}{\rm sign}(l)\left(\langle\psi| M_{kl}^{m\dagger}M^m_{kl}|\psi \rangle- |\langle\psi|M^m_{kl}|\psi \rangle|^2\right),
\end{align}
\end{theorem}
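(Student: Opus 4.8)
The plan is to mirror the structure of the non-Markovian Leung lemma (Lemma in Sec.~\ref{sec:leung-theorem}), but with the orthogonality relation in Eq.~\eqref{eq:ortho_nm_leung} replaced by the \emph{approximate} relation in Eq.~\eqref{eq:ortho_markov_leung}, in which the cross-overlaps $PU_k^{m\dagger}U_l^m P$ pick up a scalar part $\beta_l P$ and a traceless residual $\Delta_{ll}$ on $\cB(\cC)$. First I would write down the recovered state $\cR_L\circ\cE[\,|\psi\rangle\langle\psi|\,]$ in operator-sum-difference form, as in Eq.~\eqref{eq:rec_leung_nm}, splitting the noise Kraus operators into the correctable set $\{E_l\}$ and the uncorrectable set $\{N_l\}$, and using the residue decomposition $E_l P = U_l^m P(\sqrt{\lambda_l p_l}\,P + \pi_l)$ with $\pi_l = \sqrt{PE_l^\dagger E_l P} - \sqrt{\lambda_l p_l}\,P$. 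Substituting the Markovian recovery operators $R_k = P U_k^{m\dagger}$ and contracting with $|\psi\rangle\langle\psi|$, the fidelity $\langle\psi|\,\cR_L\circ\cE[\,|\psi\rangle\langle\psi|\,]\,|\psi\rangle$ becomes a double sum over $k,l$ of squared amplitudes $|\langle\psi| P U_k^{m\dagger} U_l^m P(\sqrt{\lambda_l p_l} + \pi_l)|\psi\rangle|^2$ for the correctable part, plus the analogous uncorrectable contribution built from $M^m_{kl} = P U_k^{m\dagger} N_l P$.

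For the \textbf{lower bound}, I would keep only the correctable, diagonal ($k=l$) terms in the double sum, discard the manifestly nonnegative off-diagonal and uncorrectable contributions, and then insert the approximate orthogonality $PU_l^{m\dagger}U_l^m P = \beta_l P + \Delta_{ll}$. Since $\Delta_{ll}$ is traceless but the leading eigenvalue bound on $\sqrt{PE_l^\dagger E_l P}$ controls $\pi_l$, the surviving term is $\sum_l {\rm sign}(l)\,\lambda_l p_l\,\beta_l^2\,|\langle\psi|P|\psi\rangle|^2 = \sum_l {\rm sign}(l)\,\lambda_l p_l\,\beta_l^2$, and minimising over $|\psi\rangle\in\cC$ gives the claimed bound $F^2_{\rm min}\ge \min_{|\psi\rangle}\sum_l {\rm sign}(l)\,\lambda_l p_l\,\beta_l^2$. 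For the \textbf{upper bound}, I would proceed exactly as in Theorem~1's proof: use the trace-preserving identity for $\cE$, namely $\sum_l {\rm sign}(l)\,E_l^\dagger E_l + \sum_l {\rm sign}(l)\,N_l^\dagger N_l = I$, together with the recovery completeness $\sum_k R_k^\dagger R_k = P$, sandwich between $P$'s to obtain a "sum rule" analogous to Eq.~\eqref{eq:alpha_kl_P}, and then subtract the fidelity expression from $1$. The residue operator $\pi_l$ and the traceless part $\Delta_{ll}$ each generate a variance-type term $\langle\psi|X^\dagger X|\psi\rangle - |\langle\psi|X|\psi\rangle|^2$ (nonnegative by Cauchy--Schwarz), and the cross term $\pi_l\Delta_{ll}$ and the uncorrectable operators $M^m_{kl}$ contribute the remaining two sums in Eq.~\eqref{eq:max_wcf_leung}; the triangle inequality is what lets me bound the mixed $\sqrt{\lambda_l p_l}\,\pi_l$ cross-contributions cleanly.

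The main obstacle I anticipate is the bookkeeping of the ${\rm sign}(l)$ factors when the channel is \emph{not} exactly adapted: unlike the exactly-adapted (non-Markovian) case where ${\rm sign}$ factors can pair up into squares, here the Markovian recovery is CP-divisible (so recovery ${\rm sign}$'s are $+1$) while the noise ${\rm sign}$'s alternate, which breaks the clean cancellation and forces the residual $\Delta_{ll}$ terms to carry explicit ${\rm sign}(l)$ weights — this is precisely the algebraic origin of the non-unitality discussed in Sec.~\ref{sec:nonunital-consequence}. A secondary subtlety is verifying that $\Delta_{ll}$ really is diagonal (not merely traceless) on $\cB(\cC)$, which relies on the specific structure of the amplitude-damping unitaries and the chosen code; I would either invoke this from the explicit $4$-qubit computation or carry it as a hypothesis of the theorem, as the statement already does.
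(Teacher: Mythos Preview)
Your overall strategy---write the recovered state in operator-sum-difference form, split into correctable and uncorrectable errors, use the residue decomposition $\sqrt{PE_l^\dagger E_l P}=\sqrt{\lambda_lp_l}\,P+\pi_l$, and then extract the lower bound from the diagonal correctable terms and the upper bound from the trace-preserving identity---is exactly what the paper does. However, you have conflated two distinct families of unitaries, and this is not a cosmetic slip: it is the entire content of the theorem.

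The condition in Eq.~\eqref{eq:ortho_markov_leung} is \emph{not} an overlap between Markovian unitaries $PU_k^{m\dagger}U_l^{m}P$ as you wrote. If it were, it would be trivially $\delta_{kl}P$ (since $U_l^m$ is unitary and the Markovian unitaries already satisfy the exact orthogonality of Eq.~\eqref{eq:uni_leung}), and there would be no $\beta_l$ or $\Delta_{ll}$ at all. The operator $U_l$ appearing in Eq.~\eqref{eq:ortho_markov_leung} is the polar unitary of the \emph{actual HPTP noise operator} $E_l$, i.e.\ $E_lP=U_l\sqrt{PE_l^\dagger E_lP}$, whereas $U_k^m$ comes from the polar decomposition of the Markovian-regime operator $E_k^mP$. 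The nontrivial $\beta_l$ and $\Delta_{ll}$ arise precisely because the recovery is built from $U_k^m$ but acts on noise whose polar unitaries are $U_l\neq U_l^m$. Consequently your ``residue decomposition $E_lP=U_l^mP(\sqrt{\lambda_lp_l}\,P+\pi_l)$'' is false as written; the correct decomposition is $E_lP=U_l(\sqrt{\lambda_lp_l}\,P+\pi_l)$, and the fidelity amplitude is $\langle\psi|PU_k^{m\dagger}U_lP(\sqrt{\lambda_lp_l}+\pi_l)|\psi\rangle$, which is where Eq.~\eqref{eq:ortho_markov_leung} is invoked.

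Once you fix this, your sketch is the paper's proof. One smaller point: for the lower bound you cannot ``discard the manifestly nonnegative off-diagonal and uncorrectable contributions''---in the HPTP setting these carry $\mathrm{sign}(l)$ factors and are not sign-definite. The paper instead drops the $P_E$ and uncorrectable pieces (a genuine inequality step) and then uses the exact $\delta_{kl}$ in Eq.~\eqref{eq:ortho_markov_leung} to collapse the double sum, so no off-diagonal terms remain to be discarded.
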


\begin{proof} Consider $\{E_k^m\}$ are the set of Kraus operators for the noise in the Markovian limit, which satisfies the following conditions
\begin{align}
    p_k^m(1-\lambda_k^m)&\leq\cO(\epsilon^{t+1}),\\
     PU_l^{m\dagger}U^m_k P &= \delta_{lk}P \label{eq:uni_leung},
\end{align}
where $p_k^m$ and $p_k^m\lambda_k^m$ are the largest and the smallest eigenvalues, respectively, of the operator $\sqrt{PE_k^{m\dagger}E^m_k P}$, and the unitary-operators $U_k$'s are the unitary from the polar decomposition of the operator $E^m_kP $,
\begin{align}
    E^m_kP=U^m_k \sqrt{PE_k^{m\dagger}E^m_k P}.
\end{align}
Since the unitaries $U^m_k$'s satisfy the conditions in Eq.\eqref{eq:uni_leung}, we can  construct a CPTP recovery map $\cR_L = \sum\limits_{k} R_k [.]R_k^{\dagger}+P_E [.]P_E$, where the operators $R_k = PU_k^{m\dag}$, and $P_E=I-\sum\limits_K R^{\dagger}_k R_k$. 

The worst-case fidelity $F^2_{\rm min}$ between the initial state $\ket{\psi} \in \cC$ and the state $\cR_L\circ\cE[|\psi\rangle\langle\psi|]$ is 
\begin{align}\label{eq:fmin_markov_leung}
    F^2_{\rm min}&= \underset{\ket{\psi}}{\rm min} \sum\limits_{k,l} {\rm sign}(l)|\bra{\psi}R_k A_l\ket{\psi}|^2 + |\bra{\psi}P_E \ket{\psi}|^2\\
    &\geq \underset{\ket{\psi}}{\rm min}  \sum\limits_{k,l} {\rm sign}(l)|\bra{\psi}R_k E_l\ket{\psi}|^2  \end{align}
Now, let $U_l$ be a unitary operator obtained from the polar decomposition of the operator $E_lP = U_l\sqrt{PE_l^{\dagger}E_l P }$, where $E_l$ are the HPTP error operators. From triangle inequality, we have
   \begin{align}
       F^2_{\rm min}&\geq \underset{\ket{\psi}}{\rm min}  \sum\limits_{k,l} {\rm sign}(l)|\bra{\psi}PU_k^{m\dagger}U_l P \sqrt{PE_k^{\dagger}E_k P}\ket{\psi}|^2.
\end{align}
Now, consider the following decomposition 
\begin{align}
    \sqrt{PE_k^{\dagger}E_k P }&= \sqrt{\lambda_kp_k} P+ \pi_k.
\end{align} The operator norm of the residue operator $\pi_k$ has the following bound
\begin{align}
    0\leq||\pi|| \leq \sqrt{PE_k^{\dagger}E_k P}- \sqrt{\lambda_kp_k}P.
\end{align}
Under these constraints along with following conditions
\begin{align}
    PU^{m\dagger}_kU_l P&= \delta_{kl}(\beta_{k} P+\Delta_{kk}),
\end{align}
where $\beta_{kk}$ is diagonal on the support of the codespace, the lower bound of the fidelity takes the following form 
\begin{align}
     F^2_{\rm min} &\geq \underset{\ket{\psi}}{\rm min}  \sum\limits_{k,l} {\rm sign}(l)|\bra{\psi} \beta_{l}P ( \sqrt{\lambda_lp_l} P+ \pi_l)\ket{\psi}|^2 \\
        F^2_{\rm min} &\geq \underset{\ket{\psi}}{\rm min}  \sum\limits_{l} {\rm sign}(l) \lambda_lp_l\beta_{l}^2.
\end{align}

For the upper bound we consider the Eq.\eqref{eq:fmin_markov_leung} and the completeness following relation
\begin{align}\label{eq:comp_markov_leung}
    \sum\limits_{kl} {\rm sign}(l)E_l^{\dagger}R_k^{\dagger}R_k E_l +  \sum\limits_{kl} {\rm sign}(l)N_l^{\dagger}R_k^{\dagger}R_k N_l\nonumber \\
    + \sum\limits_{k}A_k^{\dagger}P_EA_k= I .
\end{align}
Now using the orthogonality relation  in Eq.\eqref{eq:ortho_markov_leung}, and the Eq.\eqref{eq:fmin_markov_leung} and the Eq.\eqref{eq:comp_markov_leung} we obtain upper bound of $F^2_{\rm min}$ as in Eq.\eqref{eq:max_wcf_leung}.  

\end{proof}
Here we note one crucial difference between the non-Markovian Leung recovery and the Markovian one. In the Markovian Leung recovery the lower bound of the fidelity contains a ${\rm sign}(i)$ factor, where as the non-Markovian Leung does not contain this factor in the lower bound of $F^2_{\rm min}$. Along with this difference the in exact noise adaptation causes some new terms to appear in the fidelity expression. These terms cause the fidelity under Markovian recovery to fall faster than the non-Markovian one.   

As a consequence of this, we observe  
Hence, as long as the $F^2_{\rm min} \ge 1- \cO(\epsilon^{t+1})$, the code $\cC$ correct the effect of the noise $\cE$ up to $t-$order of its noise strength.

 \begin{figure}
      \centering
      \includegraphics[width=1.0\linewidth]{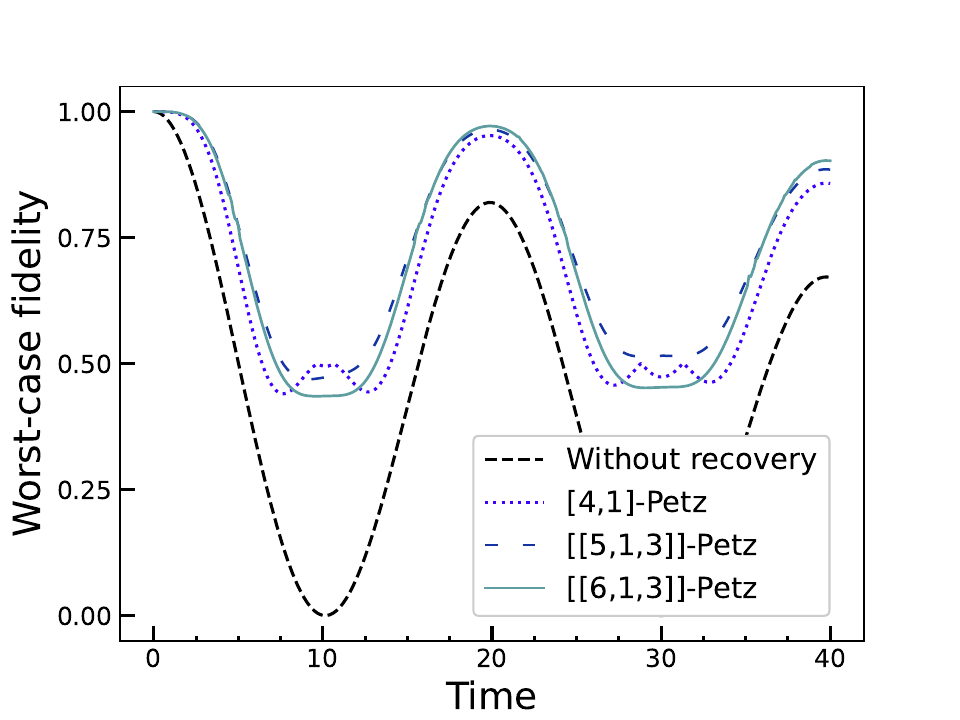}
      \caption{Markovian Petz recovery for different codes. Notice that the fidelity for [[5,1,3]] is better compared to other codes. And in the long time limit, it preserves the code space.}
      \label{fig:M_petz_diff}
  \end{figure}

\section{Correction of non-Markovian AD with the $[[5,1,3]]$ code with Markovian Petz recovery \label{sec:5-6-code-remedy}}
Consider the $[[5,1,3]]$ code. For non-Markovian amplitude damping channel (see Appendix~\ref{sec:NMAD}), we can see the Knill-Laflamme conditions do not hold for the noise operator $E_{00000}=E_0^{\otimes 5}$ as the value of the damping strength $\gamma$ approaches the maximum value: 
{\small  \begin{align}
 \langle 0_L|E^{\dagger}_{00000}E_{00000}|0_L\rangle &=   1 -\frac{5 \gamma}{2}+\frac{5 \gamma^2}{2} -\frac{5 \gamma^3}{4} + \frac{5 \gamma^4}{16}\\
 \langle 1_L|E^{\dagger}_{00000}E_{00000}|1_L\rangle &=1 -\frac{5 \gamma}{2} + \frac{5 \gamma^2}{2} - \frac{5 \gamma^3}{4} + \frac{5 \gamma^4}{16} -\frac{\gamma^5}{16},
\end{align}}where $\gamma$ is the damping strength. In this sense, the $[[5,1,3]]$ code is not a perfect code for amplitude damping noise and the Petz channel does not reduce to syndrome extraction followed by unitary recovery. When we choose Petz channel as near-optimal recovery, it again outperforms the standard syndrome-based recovery as well as Leung recovery, specifically when the chosen stabilizer code is not perfect, here due to large noise strength, see Figure (\ref{fig:M_petz_diff}). In fact, one may notice the advantage [[5,1,3]] has when the $\cR \circ \cE$ channel is non-unital, which is demonstrated in Appendix \ref{sec:inexact-demo}.

\end{document}